\providecommand{\U}[1]{\protect\rule{.1in}{.1in}}
\newtheorem{theorem}{Theorem}
\newtheorem{condition}[theorem]{Condition}
\newtheorem{corollary}[theorem]{Corollary}
\newtheorem{definition}[theorem]{Definition}
\newtheorem{lemma}[theorem]{Lemma}
\newtheorem{proposition}[theorem]{Proposition}
\newtheorem{remark}[theorem]{Remark}
\newenvironment{proof}[1][Proof]{\textbf{#1.} }{\ \rule{0.5em}{0.5em}}
\begin{document}

\title{Graphical Exchange Mechanisms}
\author{Pradeep Dubey\thanks{Center for Game Theory, Department of Economics, Stony
Brook University; and Cowles Foundation for Research in Economics, Yale
University}, Siddhartha Sahi\thanks{Department of Mathematics, Rutgers
University, New Brunswick, New Jersey}, and Martin Shubik\thanks{ Cowles
Foundation for Research in Economics, Yale University; and Santa Fe Institute,
New Mexico.}}
\date{14 December 2015}
\maketitle

\section*{Abstract}

Consider an exchange mechanism which accepts \textquotedblleft
diversified\textquotedblright\ offers of various commodities and redistributes
everything it receives. We impose certain conditions of fairness and
convenience on such a mechanism and show that it admits unique prices, which
equalize the value of offers and returns for each individual.

We next define the complexity of a mechanism in terms of certain integers
$\tau_{ij},\pi_{ij}$ and $k_{i}$ that represent the time\ required to exchange
$i$ for $j$, the difficulty\ in determining the exchange ratio, and the
dimension\ of the message space. We show that there are a \emph{finite }number
of minimally complex mechanisms, in each of which all trade is conducted
through markets for commodity pairs.

Finally we consider minimal mechanisms with smallest worst-case\ complexities
$\tau=\max\tau_{ij}$ and $\pi=\max\pi_{ij}$. For $m>3$ commodities, there are
precisely three such mechanisms, one of which has a distinguished commodity --
the money -- that serves as the sole medium of exchange. As $m\rightarrow
\infty$ the money mechanism is the only one with bounded $\left(  \pi
,\tau\right)  $.

\textbf{JEL Classification}: C70, C72, C79, D44, D63, D82.

\textbf{Keywords: }exchange mechanism, minimal complexity, prices, markets, money.

\section{Introduction}

The purpose of this paper is to show how simple criteria of fairness,
convenience and complexity can lead to the successive emergence of prices,
markets, and money, in a Cournotian setting for commodity exchange. In the
process, we arrive at a rationale for money which is purely \textquotedblleft
mechanistic\textquotedblright\ in spirit, and complements the existing
utilitarian and behavioral literature on the subject (see \ref{Rel Lit}).

We consider abstract exchange mechanisms\footnote{For us a Cournotian
\textquotedblleft mechanism\textquotedblright\ is a formal device that enables
everyman to trade, with the simple expedient of offering commodities and
without having to account for his precise motivation or even bothering to
pretend that he has one. (See section 1.1.3.1 of \cite{Mertens: 2003} for a
spirited defence of the use of \textquotedblleft mechanism\textquotedblright%
\ in this plain English meaning of the word.) To forestall confusion, we
emphasize that our usage is different from that of the recent mechanism-design
literature, where the word has acquired a specialized connotation.}, which
accept \textquotedblleft diversified\textquotedblright\ offers of various
commodities and redistribute everything, and further satisfy five conditions,
embodying fairness and convenience that we term \emph{Anonymity, Aggregation,
Invariance, Non-dissipation }and\emph{ Flexibility.} Although there are
infinitely many such mechanisms, our first result is that each admits unique
\emph{prices}\footnote{i.e., consistent exchange-rates across commodity
pairs.}, which lead to \emph{value conservation, }\textit{i.e.} equalize the
value of individual offers and returns.

We next define some natural notions of \textquotedblleft
complexity\textquotedblright\ for a mechanism, and, in keeping with the idea
of convenience, study mechanisms with minimal complexity. This leads to a
\emph{finite}\textit{ }class $\mathfrak{M}_{\ast}\subset\mathfrak{M}_{g}$,
where $\mathfrak{M}_{g}$ denotes certain \emph{graphical\ mechanisms} that are
in one-to-one correspondence with directed, connected graphs on the set of
commodities. The directed edge $ij$ may be interpreted in $\mathfrak{M}_{\ast
}$ as a \emph{market}\ that provides traders the opportunity to offer $i$ in
exchange for $j$. Prices not only conserve values in $\mathfrak{M}_{\ast}$ (in
fact, in $\mathfrak{M}_{g}$) but \emph{mediate trade} in the sense that the
return to a trader depends only on his own offer and the prices. In short,
prices \textquotedblleft decouple\textquotedblright\ the interaction between traders.

The emergence of prices and markets paves the way for the culmination of the
analysis, namely the emergence of \emph{money}. To this end we introduce
additional refined criteria of complexity on $\mathfrak{M}_{\ast}$ and study
the corresponding minimal mechanisms, which we term \emph{strongly }minimal.
It turns out that there are only \emph{three} strongly minimal mechanisms, up
to a relabeling of commodities. In one of these, a single commodity emerges
endogenously as \emph{money} and mediates trade among decentralized markets
for the other commodities. Moreover, with a moderate increase in the number of
commodities, the money mechanism quickly supersedes the other two in a very
precise sense.

Finally let us mention that, while this paper is a companion to
\cite{Dubey-Sahi-Shubik}, the two are meant to be readable independently. This
has necessitated some overlap but it is minimal. To be precise, the conditions
on the mechanism (with the exception of Flexibility) appear in
\cite{Dubey-Sahi-Shubik}, so does Proposition \ref{Linearity}, and the rest is disjoint.

\subsection{Related Literature\label{Rel Lit}}

The emergence of money as a medium of exchange has been a matter of
considerable discussion in economics. One approach, following Jevons
\cite{Jevons:1875}, has focused on search-theoretic models that involve
repeated and random bilateral meeting between agents (see, \textit{e.g.},
\cite{Bannerjee-Maskin(1996)}, \cite{Iwai:1996}, \cite{Jones:1976},
\cite{Kiyotaki-Wright: 1989}, \cite{Kiyotaki-Wright: 1993},
\cite{Li-Wright:1998}, \cite{Ostroy:1973}, \cite{Trejos-Wright: 1995} and the
references therein). Another line of inquiry is based on partial or general
equilibrium models with various kinds of frictions in trade, such as
transactions costs or limited trading opportunities (see, \textit{e.g.},
\cite{Foley:1970}, \cite{Hahn:1971}, \cite{Heller: 1974}, \cite{heller-Starr:
1976}, \cite{Howitt-Clower: 2000}, \cite{Ostroy-Starr: 1974},
\cite{Ostroy-Starr: 1990}, \cite{starr:2012}, \cite{Starret:1973},
\cite{Wallace: 1980}). These models turn on notions of rational expectations
and utility-maximizing behavior of the agents in equilibrium. In contrast, as
was said, our analysis is based purely on the mechanism of trade is and
independent of the characteristics of the agents, such as their endowments or utilities.

It is worth emphasizing that our analysis is quite agnostic regarding the
choice of any particular money\footnote{Our model can equally accomodate fiat
money or commodity money, depending on how preferences are introduced. Indeed,
\emph{all} we suppose is that the $m$ items being traded are distinct from
each other. In particular, offers and returns could just be quotes (think of
e-commerce!), instead of actual shipment of goods.}, being only at pains to
point out the urgency of appointing \emph{some} money.\ For a discussion of
different criteria entailed in the choice of a suitable \textquotedblleft
commodity money\textquotedblright\ such as its portability, verifiability,
divisibility and durability; or, alternatively, the backing of the state
requisite to sustain \textquotedblleft fiat money\textquotedblright, see,
\emph{e,g}., \cite{Adam Smith}, \cite{Jevons:1875}, \cite{Knapp:1905},
\cite{Kualla:1920}, \cite{Lerner:1947}, \cite{Menger:1892}, \cite{Schumpeter:
1954}; and, for a recent survey on both kinds of money, see \cite{Shubik:1993}
and \cite{starr:2012}. It would be interesting to incorporate some of these
criteria, as well as the utilitarian considerations for money, into our
mechanistic framework.

This paper is intimately related to \cite{Dubey-Sahi-Shubik}. Let us briefly
recount the model there. We make the \emph{hypothesis} in
\cite{Dubey-Sahi-Shubik} that any offer of a commodity $i$ specifies some
other commodity $j$ which is being sought in exchange for $i.$ Thus, drawing a
directed arc $ij$ for every such offer permitted in the mechanism, we obtain a
(directed, connected\footnote{connected, because we require that it should be
possible to convert any commodity to another by iterated trading.}) graph $G$.
There are infinitely many mechanisms for any given $G,$ but it is shown in
\cite{Dubey-Sahi-Shubik} that exactly one of them is categorically determined
by \emph{four} of our five conditions, namely \emph{Anonymity, Aggregation,
Invariance, Non-dissipation }(all except \emph{Flexibility})\emph{. }This is
the graphical \textquotedblleft$G$-mechanism\textquotedblright\ mentioned
earlier; so that the class $\mathfrak{M}_{g}$ is precisely the one generated
as $G$ varies over all directed, connected graphs on the fixed node-set of
commodities. Our refined complexity criteria apply equally to $\mathfrak{M}%
_{g}$ as to its subset $\mathfrak{M}_{\ast}$, and we show that $\mathfrak{M}%
_{g}$ has the \emph{same} three strongly minimal mechanisms as $\mathfrak{M}%
_{\ast}.$ This is the main conclusion of \cite{Dubey-Sahi-Shubik} and it
constitutes a key step in the proof of the emergence of money in this paper.

Our current paper thus puts the analysis of \cite{Dubey-Sahi-Shubik}\ on a
more general footing. We start with a domain which is much richer than
$\mathfrak{M}_{g}$, and which permits traders to indulge in \textquotedblleft
cheap talk\textquotedblright\ in order to diversify their offers in any
commodity $i.$ Furthermore, we allow for the possibility that these offers may
get complicated \emph{bundles} of commodities in return. But
\emph{Flexibility} guarantees that there exist special messages in the cheap
talk through which a trader can \textquotedblleft unbundle\textquotedblright%
\ his return, i.e., messages that enable him to get \emph{only} $j$ in
exchange for $i$, whenever $j$ is obtainable --- albeit in conjunction with
other commodities --- via an offer of $i$. Thus flexibility \emph{embeds} a
graph $G,$ as a sharply delineated language, within the tangle of cheap talk.
We essentially show that our complexity criteria cuts away the tangle, leaving
behind only the graphical $G$-mechanisms in $\mathfrak{M}_{\ast}%
\subset\mathfrak{M}_{g}$. \emph{Thus we deduce the existence of markets (i.e.,
the edges of the graph }$G$\emph{) from an abstract standpoint, instead of
postulating them as in }\cite{Dubey-Sahi-Shubik}\emph{. }The exact structure
of the subdomain $\mathfrak{M}_{\ast}$ of $\mathfrak{M}_{g}$ is not explored
by us, as that would detract from the primary purpose of this paper, which is
to arrive expeditiously at the money mechanism.

The other precursor of this paper, at both the technical and conceptual level,
is \cite{Dubey-Sahi: 2003}, where a mechanism produces not only trades but
also prices, based upon everyone's offers; and where \emph{price mediation }is
\emph{postulated }as a condition. The main result of \cite{Dubey-Sahi: 2003}
is that $\mathfrak{M}_{g}$ is characterized by \emph{Anonymity, Aggregation,
Invariance, Price Mediation }and \emph{Accessibility }(the last representing a
weak form of continuity)$.$ In contrast, here we deduce the existence of
prices, as well as their crucial role in mediating trade, based on
considerations of a different sort, as outlined in the introduction and
formalized in section \ref{G-mechanisms} below. Moreover, the analysis in
\cite{Dubey-Sahi: 2003} stopped at the characterization of $\mathfrak{M}_{g}$
and did not delve into any further selection among the mechanisms.

\section{Exchange Mechanisms\label{Formal model}}

\begin{definition}
A \emph{pre-mechanism} consists of the following data:

\begin{enumerate}
\item a commodity space $C=\mathbb{R}_{+}^{m}$;

\item an action space $S=\mathbb{R}_{+}^{K}$ where $K=K_{1}\amalg\cdots\amalg
K_{m}$ is a finite\emph{ set;}

\item for each integer $n\geq1$, a \emph{return} map $\rho^{n}:S\left(
n\right)  \rightarrow C^{n}$ where
\begin{equation}
S\left(  n\right)  =\left\{  \left(  a^{1},\ldots,a^{n}\right)  \in
S^{n}:a^{1}+\cdots+a^{n}\in S_{+}\right\}  ,\quad S_{+}=\mathbb{R}_{++}^{K}.
\label{=pos-agg}%
\end{equation}

\end{enumerate}
\end{definition}

We refer to the elements of $K_{i}$ as $i$\emph{-indices}, and for $a\in S$ we
define $\overline{a}\in C$ by summing over the various $i$-indices for each
$i$; thus we have%
\[
\overline{a}=\left(  \alpha_{1},\ldots,\alpha_{m}\right)  \text{ where }%
\alpha_{i}=%
{\textstyle\sum\nolimits_{h\in K_{i}}}
a_{h}\text{.}%
\]

\begin{definition}
An \emph{exchange mechanism} is a pre-mechanism which satisfies:
\begin{equation}
\text{if }\left(  r^{1},\ldots,r^{n}\right)  =\rho^{n}\left(  a^{1}%
,\ldots,a^{n}\right)  \text{ then }r^{1}+\cdots+r^{n}=\overline{a^{1}}%
+\cdots+\overline{a^{n}}. \label{=cc}%
\end{equation}

\end{definition}

The interpretation is as follows. There is an underlying set $\left\{
1,\ldots,m\right\}  $ of commodities\footnote{It will always be clear from the
context whether $i$ is the name of a commodity, or that of an individual, or
just an integer.}, and for each $i$-index $h\in K_{i}$, the component $a_{h}$
of an action $a$ represents an offer of commodity $i.$ Thus we also refer to
$a$ as an \emph{offer}, and the various $i$-indices serve to \textquotedblleft
diversify\textquotedblright\ the offer in $i$. An exchange mechanism enables
trade as follows: having received $n$ offers $\left(  a^{1},\ldots
,a^{n}\right)  $, which are \textquotedblleft positive on the
aggregate\textquotedblright\ (\ref{=pos-agg}), it redistributes the
commodities according to $\rho^{n}$. Condition (\ref{=cc}) means that
commodities are conserved.

As mentioned already in section \ref{Rel Lit}, one possible interpretation of
the indices is as a common language in which traders may communicate with the
mechanism $M$. The language is completely abstract: no structure is imposed on
it except that it be of finite size. The elements of $K_{i}$ may be thought of
as costless messages (\textquotedblleft cheap talk\textquotedblright) that
accompany offers in $i$. Another interpretation of $K_{i}$ could be that it
represents different times (or, places) when (or, where) the offer of $i$ is
sent. The reader can no doubt think of still more interpretations. It is to
make room for all these that we have used the neutral term \textquotedblleft
index\textquotedblright. However, in the context of graphical mechanisms which
emerge out of our analysis (see section \ref{G-mechanisms}), it turns out that
$i$-indices have a concrete economic interpretation as certain commodity pairs
$ij$, representing \textquotedblleft markets\textquotedblright\ where $i$ can
be exchanged for $j$.

\subsection{Conditions on the Mechanisms\label{n:vec}}

In order to describe our conditions, we need some notation. We first define
scaling actions of $\lambda\in\mathbb{R}_{++}^{m}$ on $r\in C$ and $a\in S$
via%
\[
\left(  \lambda\ast r\right)  _{i}=\lambda_{i}r_{i}\text{ for all
}i\text{,\quad}\left(  \lambda\ast a\right)  _{h}=\lambda_{i}a_{h}\text{ for
all }h\in K_{i}\text{.}%
\]
Let $\rho:S\left(  2\right)  \rightarrow C$ denote the first component of the
two trader return function $\rho^{2}$. If $h\in K_{i}$ is an $i$-index then
the unit vector $e_{h}\in S$ is an offer in commodity $i$ alone. The $j$-th
component $\rho_{j}\left(  e_{h},a\right)  $ of $\rho\left(  e_{h},a\right)  $
is the return of commodity $j$ to trader 1, when he offers $e_{h}$ and the
other offers $a$.

\begin{definition}
We say $h\in K_{i}$ is an $ij$\emph{-index} if for some $a$ we have%
\[
\rho_{j}\left(  e_{h},a\right)  >0\text{.}%
\]
We say an $ij$-index $h$ is \emph{pure} if for all $a$ we have
\[
\rho_{k}\left(  e_{h},a\right)  =0\text{ for all }k\neq j\text{.}%
\]

\end{definition}

Our five conditions on an exchange mechanism, termed \emph{Anonymity},
\emph{Aggregation, Invariance, Non-dissipation,} and \emph{Flexibility, }are
as follows.

\begin{condition}
\label{Condition} If $\rho^{n}\left(  a^{1},\ldots,a^{n}\right)  =\left(
r^{1},\ldots,r^{n}\right)  $ then we have

\begin{enumerate}
\item \label{Anonymity} $\rho^{n}\left(  a^{\sigma\left(  1\right)  }%
,\ldots,a^{\sigma\left(  n\right)  }\right)  =\left(  r^{\sigma\left(
1\right)  },\ldots,r^{\sigma\left(  n\right)  }\right)  $ for any permutation
$\sigma.$

\item \label{Aggregation} $\rho^{n-1}\left(  a^{1},\ldots,a^{n-2}%
,a^{n-1}+a^{n}\right)  =\left(  r^{1},\ldots,r^{n-2},r^{n-1}+r^{n}\right)  $.

\item \label{Invariance} $\rho^{n}\left(  \lambda\ast a^{1},\ldots,\lambda\ast
a^{n}\right)  =\left(  \lambda\ast r^{1},\ldots,\lambda\ast r^{n}\right)  $
for all $\lambda\in\mathbb{R}_{++}^{m}$.

\item \label{Non-dissipation} For each $i,$ $r^{i}-\overline{a^{i}}$ is either
$0$ or it has a strictly positive component.

\item \label{Flexibility} For all $ij$, if there is an $ij$-index then there
is a pure $ij$-index.
\end{enumerate}
\end{condition}

For such a mechanism $M,$ by \emph{Anonymity} and \emph{Aggregation }we have
\[
\rho^{n}\left(  a^{1},\ldots,a^{n}\right)  =\left(  \rho\left(  a^{1}%
,a^{-1}\right)  ,\ldots,\rho\left(  a^{n},a^{-n}\right)  \right)  \text{ where
}a^{-i}=a^{1}+\cdots+a^{n}-a^{i}.
\]
Thus $M$ is uniquely determined by $\rho$, and also by the \emph{trade} and
\emph{net trade} functions, which are defined as follows:
\[
r\left(  a,b\right)  =\rho\left(  a,b-a\right)  ,\quad\nu\left(  a,b\right)
=r\left(  a,b\right)  -\bar{a}\text{.}%
\]
These latter functions have domain $\left\{  \left(  a,b\right)  \in S\times
S_{+}:a\leq b\right\}  $.

\begin{proposition}
\label{Linearity} $\nu$ admits a unique extension to $S\times S_{+}$
satisfying
\[
\nu\left(  ta,b\right)  =t\nu\left(  a,b\right)  ,\qquad\nu\left(
a,tb\right)  =\nu\left(  a,b\right)  \text{ for all }t>0.
\]

\end{proposition}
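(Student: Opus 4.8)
The plan is to reduce the whole statement to the single scaling identity
\[
\rho(ta,b-ta)=t\,\rho(a,b-a)\qquad\text{for }(a,b)\in D,\ 0<t\le 1,\qquad(\star)
\]
where $D=\{(a,b)\in S\times S_{+}:a\le b\}$ is the domain of $\nu$. \emph{Uniqueness} of the extension is then immediate: given any $(a,b)\in S\times S_{+}$, all coordinates of $b$ are strictly positive, so $ta\le b$ for all sufficiently small $t>0$; since $(ta,b)\in D$ and any admissible extension must satisfy $\nu(ta,b)=t\,\nu(a,b)$, the value $\nu(a,b)=t^{-1}\nu(ta,b)$ is forced. For \emph{existence} I would turn this around and \emph{define} $\tilde\nu(a,b):=t^{-1}\nu(ta,b)$ for any $t>0$ with $ta\le b$. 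Granting $(\star)$, this does not depend on the choice of $t$ (if $t\le t'$, apply $(\star)$ to $(t'a,b)\in D$ with ratio $t/t'\le 1$ to get $\nu(ta,b)=(t/t')\nu(t'a,b)$), it agrees with $\nu$ on $D$ (take $t=1$), and $\tilde\nu(ta,b)=t\,\tilde\nu(a,b)$ for every real $t>0$ by the same bookkeeping. The second homogeneity is then cheap: the scalar case $\lambda=(\mu,\dots,\mu)$ of \emph{Invariance} gives $\nu(\mu a,\mu b)=\mu\,\nu(a,b)$ on $D$, hence $\tilde\nu(\mu a,\mu b)=\mu\,\tilde\nu(a,b)$ on $S\times S_{+}$, and combining this with the first homogeneity yields $\tilde\nu(a,tb)=\tilde\nu(t(a/t),tb)=t\,\tilde\nu(a/t,b)=t\cdot t^{-1}\tilde\nu(a,b)=\tilde\nu(a,b)$.

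Everything thus hinges on $(\star)$. Fix $(a,b)\in D$, set $c=b-a\ge 0$, and write $g(t)=\rho(ta,b-ta)$ for $t\in(0,1]$; note $(\star)$ is the assertion $g(t)=t\,g(1)$. The first ingredient is a \emph{partition identity}: for any positive reals $t_{1},\dots,t_{k}$ with $\sum_{i}t_{i}=1$, consider the $k+1$ offers $t_{1}a,\dots,t_{k}a,c$. Their aggregate is $a+c=b\in S_{+}$, so the tuple lies in $S(k+1)$; by the averaging formula obtained above from Anonymity and Aggregation, the $i$-th trader's return is $\rho(t_{i}a,\,b-t_{i}a)=g(t_{i})$, while grouping the first $k$ traders (Aggregation) shows their combined return equals $\rho(a,c)=g(1)$. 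Hence $\sum_{i=1}^{k}g(t_{i})=g(1)$ whenever $\sum_{i}t_{i}=1$.

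The second ingredient is \emph{boundedness}. In $\rho^{2}(ta,b-ta)$, conservation of commodities gives $g(t)+\rho(b-ta,ta)=\overline{ta}+\overline{b-ta}=\bar b$, and since $\rho(b-ta,ta)\in C=\mathbb{R}_{+}^{m}$ we get $0\le g(t)\le\bar b$ for all $t\in(0,1]$. Now apply the partition identity to the triples $\{s,t,1-s-t\}$ and $\{s+t,1-s-t\}$ (both legitimate when $s,t>0$ and $s+t<1$) and subtract: $g(s+t)=g(s)+g(t)$, so $g$ is additive on the interval $(0,1)$. A bounded additive function on an interval is linear, so $g(t)=vt$ on $(0,1)$ for some $v\in\mathbb{R}^{m}$; substituting into the case $\{t,1-t\}$ of the partition identity, $vt+v(1-t)=g(1)$, forces $v=g(1)$, which is $(\star)$ (the case $t=1$ being trivial).

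The one genuinely delicate point is the passage from \emph{rational} to arbitrary real $t$: splitting $a$ into $n$ equal parts $a/n$ instantly yields $g(k/n)=(k/n)g(1)$, but nothing so cheap reaches irrational $t$. The resolution, and the step I expect to require the most care, is that conservation of commodities makes $g$ \emph{bounded}, and a bounded solution of Cauchy's equation on an interval is necessarily linear. The surrounding bookkeeping is routine provided one checks at each stage that the constructed tuples lie in $S(n)$ — which they do because their aggregate is always $b\in S_{+}$ — and that the scalar multiples $ta$ stay $\le b$, so that the relevant pairs remain in $D$.
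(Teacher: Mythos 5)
Your argument is correct, but it is worth noting how it sits relative to the paper: the paper does not prove Proposition \ref{Linearity} in-text at all, it simply cites Lemma 1 of \cite{Dubey-Sahi: 2003} (remarking that only \emph{Anonymity}, \emph{Aggregation} and \emph{Invariance} are needed there), whereas you have supplied a complete self-contained proof of the kind that citation stands in for. Your reduction to the scaling identity $(\star)$, the uniqueness/consistency bookkeeping, and the derivation of the second homogeneity from scalar \emph{Invariance} combined with the first are all sound. The core is also right: Anonymity plus Aggregation applied to the profile $(t_{1}a,\ldots,t_{k}a,\,b-a)$, whose aggregate is $b\in S_{+}$, gives the partition identity $\sum_{i}g(t_{i})=g(1)$ for $g(t)=\rho(ta,b-ta)$, hence a restricted Cauchy equation on $(0,1)$, and conservation of commodities together with nonnegativity of returns gives $0\le g(t)\le\bar{b}$, which is enough regularity to force linearity (this also disposes of the degenerate case $a=0$, since then $kg(t)=g(1)$ for all $k$ forces $g\equiv 0$). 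Two small observations. First, the regularity step can be made even cheaper: additivity plus $g\ge 0$ makes $g$ componentwise nondecreasing, and a monotone additive function is linear, so you can avoid invoking the bounded-Cauchy theorem on a restricted domain (which, as you state it, strictly speaking requires a short extension argument from $(0,1)$ to all of $\mathbb{R}$, or a direct density argument). Second, besides the three named Conditions your proof uses conservation and nonnegativity of returns; these are part of the definition of an exchange mechanism rather than of the Conditions, so this is entirely legitimate and presumably matches what the cited lemma does implicitly.
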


\begin{proof}
See Lemma 1 of \cite{Dubey-Sahi: 2003}. Although \cite{Dubey-Sahi: 2003}
considers a more restrictive class of mechanisms, we note that the proof of
Lemma 1 there only uses \emph{Anonymity, Aggregation, }and\emph{ Invariance.}
\end{proof}

In view of the above result, we will drop the restriction $a\leq b$ for
$\nu\left(  a,b\right)  $.

\subsection{Comments on the Conditions}

The first condition is\emph{ Anonymity; }it stipulates that the mechanism be
blind to all characteristics of a trader other than his offer. The second
condition is \emph{Aggregation}, it asserts that if any trader pretends to be
two different persons by splitting his offer, the returns to the others is
unaffected. \emph{Aggregation} does not imply that if two individuals were to
merge, they would be unable to enhance their \textquotedblleft
oligopolistic\ power\textquotedblright. For despite the aggregation condition,
the merged individuals are free to \emph{coordinate} their actions by jointly
picking a point in the Cartesian product of their action spaces. Indeed all
the mechanisms we obtain display this \textquotedblleft oligopolistic
effect\textquotedblright, even though they also satisfy \emph{Aggregation}.
The two conditions embody fairness, enabling free entry for any new
participant on non-discriminatory terms, and thereby making the mechanism more
\textquotedblleft inclusive\textquotedblright. They also contribute to
convenience, if either of these conditions were violated, trade would become a
cumbersome affair: each individual would need to keep track of the full
distribution of offers across the entire population, and then figure out how
to diversify his own offers in response.

The third condition is \emph{Invariance; }its main content is that the
\textit{maps} $\rho^{n}$ which comprise the mechanism are invariant under a
change of units in which commodities are measured. This makes the mechanism
much simpler to operate in: one does not need to keep track of seven pounds or
seven kilograms or seven tons, just the numeral $7$ will do. It is worthy of
note that the cuneiform tablets of ancient Sumeria, which are some of the
earliest examples of written language\ and arithmetic, are in large part
devoted to records and receipts pertaining to economic transactions.
\emph{Invariance} postulates the "numericity" property of the $\rho^{n}$
making them independent of the underlying choice of units, and this goes to
the very heart of the quantitative measurement of commodities. In its absence,
one would need to figure out how the maps are altered when units change, as
they are prone to do, especially in a dynamic economy. This would make the
mechanism cumbersome to use.

The fourth condition\ is \emph{Non-dissipation}; it says that no trader's
return can be less commodity-wise than his offer. If it were violated, such
unfortunate traders would find it grossly unfair and tend to abandon the
mechanism. In conjunction with \emph{Aggregation, Anonymity}, and the
conservation of commodities, this immediately implies \emph{no-arbitrage: }%
\[
\text{for any }a,b\text{ neither }\nu(a,b)\gvertneqq0\text{ nor }%
\nu(a,b)\lvertneqq0.
\]
To see this, note that in view of Proposition \ref{Linearity} we need consider
only the case $a\leq b$ and rule out $\nu(a,b)\gvertneqq0$. Denote $c=b-a.$
Then
\[
\nu(a,b)+\nu(c,b)=\nu(a+c,b)=\nu(b,b)=0,
\]
where the first equality follows from \emph{Aggregation}, and the last from
conservation of commodities. But then $\nu(a,b)\gvertneqq0$ implies
$\nu(c,b)\lvertneqq0,$ contradicting \emph{Non-dissipation.}

The fifth and final condition is \emph{Flexibility}, it reflects the
perspective of a trader\ who wishes to interact with the mechanism to
exchange\ a single commodity $i$ for some other commodity $j$. If $h$ is an
$ij$-index then we have $\rho_{j}\left(  e_{h},a\right)  >0$, which means that
the trader can get a positive amount of commodity $j$ for a suitable offer by
the other(s). However if there is no pure $ij$-index then the trader may be
forced to accept commodity $j$ bundled with other commodities.
\emph{Flexibility} guarantees that there are \textquotedblleft
enough\textquotedblright\ pure indices to enable individuals to
\textquotedblleft unbundle\textquotedblright\ their returns. The mechanism may
well admit complex trading opportunities, such as swaps of commodity bundles,
that coexist with these indices; the former comprising, so to speak, a tangled
web around the latter. It is our complexity criteria below which eliminate the
web and allow only the pure $ij$-indices to survive (as markets of $i$ for
$j$), see Theorem \ref{Emergence of G-mechanisms}.

\section{Complexity}

We now discuss three notions of \emph{complexity} for a mechanism $M$. The
first, and simplest, is
\[
k_{i}=k_{i}(M)=\left\vert K_{i}(M)\right\vert
\]
which is the dimension of the offer space for commodity $i$, and which we
refer to as the $i$-\emph{index complexity}.

The next two notions are defined from standpoint of a \textquotedblleft
binary\textquotedblright\ $ij$-trader\footnote{We focus on bilateral trades
between pairs of commodities because they form an iterative basis for all
trade. This is so on account of \emph{prices} (exchange rates) that will
shortly be shown to emerge and govern all trade.} who interfaces with $M$\ in
order to exchange commodity $i$ for \emph{exclusively }commodity $j$. We focus
on two basic concerns for such a trader: first, \emph{how long} will it take
him to effect the exchange; and, second, \emph{how difficult }will it be for
him to figure out the terms of exchange? The first concern leads to the notion
of \textquotedblleft time complexity\textquotedblright, and the second to that
of \textquotedblleft price complexity\textquotedblright.

We fix some notation; let $e_{i}$ denote the $i$-th unit vector.

\begin{definition}
A vector $v$ is an $i$\emph{-vector} if $v=se_{i}$ for some real number $s>0$;
and an $\bar{\imath}j$\emph{-vector} if $v=-se_{i}+te_{j}$ for some real
$s,t>0$.
\end{definition}

\subsection{Time Complexity\label{time-complexity}}

\begin{definition}
Given two commodity bundles $v,w\in C$ we will say that $v$ can be
\emph{converted} to $w,$ and we write $v\rightarrow w$ if there exist $a,b$
such that
\[
w=v+\nu(a,b)\text{ and }\overline{a}\leq v\text{.}%
\]

\end{definition}

Let $\tau\left(  v,w,M\right)  $ denote the smallest \textquotedblleft
time\textquotedblright\ $t$ for which there is a sequence\footnote{Notice that
at the moment we permit the market state to vary across the $t$ transitions of
the sequence $v\rightarrow v^{1}\rightarrow\cdots\rightarrow v^{t-1}%
\rightarrow w.$ But even if we were to restrict attention to the case in which
the \emph{same }$b\in S_{+}$ should represent the market state across all
transitions, the time complexity would of $M$ will remain unaffected. This
follows from Lemma \ref{lem:vw}}%
\[
v\rightarrow v^{1}\rightarrow\cdots\rightarrow v^{t-1}\rightarrow w.
\]
We define the $ij$-\emph{time complexity} and (maximum) \emph{time complexity}
as follows\footnote{It follows by \emph{Invariance} that in the definition one
can replace the unit vectors $e_{i},e_{j}$ by arbitrary $i$- and $j$-
vectors.}%
\[
\tau_{ij}\left(  M\right)  :=\tau\left(  e_{i},e_{j},M\right)  ,\quad
\tau(M):=\max\nolimits_{i\neq j}\left\{  \tau_{ij}(M)\right\}  .
\]
We say that a mechanism $M$ is \emph{connected }if $\tau(M)<\infty.$

\begin{definition}
\label{all mech} $\mathfrak{M(}m\mathfrak{)}$ is the class of \emph{all}
connected mechanisms with commodity set $\left\{  1,\ldots,m\right\}  $, that
satisfy \emph{Anonymity, Aggregation, Invariance, Non-dissipation }and
\emph{Flexibility.}
\end{definition}

When the commodity set $\left\{  1,\ldots,m\right\}  $ is understood, we shall
often suppress $m$ and write $\mathfrak{M=M}(m).$

\subsection{The Emergence of Prices}

Let $\mathbb{R}_{++}^{m}/\sim$ be the set of rays in $\mathbb{R}_{++}^{m},$
representing prices\footnote{A ray $p$ represents a price vector up to overall
multiplication by a positive scalar; the ratios $p_{i}/p_{j}$ represent
well-defined consistent exchange rates across all pairs $ij$ of commodities.}.
It turns out that prices\footnote{Our analysis remains intact if there is a
continuum of traders (see Section 7 of \cite{Dubey-Sahi-Shubik}). In this
case, an individual's action has no effect on the exchange rate. Otherwise it
affects the aggregate offer (i.e., the state of the market) and thereby the
exchange rate, which is but to be expected in an oligopolistic framework.}
emerge in connected mechanisms; and the values, under these prices, of offers
and returns are conserved for every trader.

\begin{theorem}
\label{Emergence of Prices} Let $M\in\mathfrak{M}$ with associated net trade
function $\nu.$ Then there is a unique map $p:$ $\mathbb{R}_{++}%
^{K}\rightarrow$ $\mathbb{R}_{++}^{m}/\sim$ satisfying \emph{value
conservation\footnote{Note that value conservation is perforce true on the
aggregate since commodities are neither created nor destroyed by the
mechanism, only redistributed. What is shown here is that it holds at the
individual level, i.e., the mechanism does not assign \textquotedblleft
profitable\textquotedblright\ trades to some at the expense of others.}:
}$p(b)\cdot\nu(a,b)=0.$
\end{theorem}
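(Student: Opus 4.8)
The plan is to fix the market state $b\in\mathbb{R}_{++}^{K}$ throughout and to study the single-trader net trade as a function of the offer. By Proposition \ref{Linearity} (homogeneity of degree one in the offer) together with the additivity $\nu(a+c,b)=\nu(a,b)+\nu(c,b)$ supplied by \emph{Aggregation}, the map $a\mapsto\nu(a,b)$ is the restriction to $S$ of a linear map $L_{b}\colon\mathbb{R}^{K}\to\mathbb{R}^{m}$. Value conservation at $b$ says precisely that the ray $p(b)$ must be a strictly positive vector orthogonal to the image $V_{b}:=\operatorname{im}(L_{b})$, so the theorem amounts to showing that for every $b$ the orthogonal complement $V_{b}^{\perp}$ meets $\mathbb{R}_{++}^{m}$ in exactly one ray.

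First I would establish two structural facts about $V_{b}$: it is a linear subspace, and it contains no semipositive vector. For the first, $L_{b}(S)$ is a convex cone closed under negation --- given $v=\nu(a,b)$, pick $t>0$ small enough that $ta\le b$ and use $\nu(ta,b)+\nu(b-ta,b)=\nu(b,b)=0$ to get $-tv=\nu(b-ta,b)\in L_{b}(S)$ --- hence $L_{b}(S)$ is a subspace, and since it contains the columns $L_{b}(e_{h})$ it equals $V_{b}$. For the second, every element of $V_{b}$ has the form $\nu(a,b)$, and the no-arbitrage property already deduced in the text from \emph{Non-dissipation} forbids such a vector from being semipositive. Existence of $p(b)$ is then immediate from a theorem of the alternative (Stiemke/Gordan); equivalently, $V_{b}$ is a closed subspace disjoint from the compact simplex $\{x\ge 0:\sum_{i}x_{i}=1\}$, and a strictly separating functional must vanish on all of the subspace $V_{b}$ while being positive on each $e_{i}$, giving $p(b)\in\mathbb{R}_{++}^{m}$ with $p(b)\perp V_{b}$.

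For uniqueness I would introduce the directed graph $G$ on the commodity set having an edge $i\to j$ whenever a pure $ij$-index exists, and argue two things. First, each edge pins a price ratio: for a pure $ij$-index $h$ one computes $\nu(e_{h},b)=-e_{i}+c\,e_{j}$ with $c\ge 0$, and $c>0$ by no-arbitrage (otherwise $-e_{i}$ would be seminegative), so the $\bar\imath j$-vector $-e_{i}+c\,e_{j}$ lies in $V_{b}$ and forces $p(b)_{i}=c\,p(b)_{j}$ for any admissible $p(b)$. Second, $G$ is strongly connected, so composing these ratios along directed paths pins down every ratio $p(b)_{i}/p(b)_{j}$ and hence $p(b)$ up to scaling. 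To prove strong connectedness I would track supports along a conversion chain: for $h\in K_{k}$ the column $\nu(e_{h},b)$ has every coordinate outside $k$ nonnegative, and a strictly positive such coordinate $k'$ makes $h$ a $kk'$-index, so by \emph{Flexibility} there is a pure one and $k\to k'$ is an edge of $G$. It then follows that if $\operatorname{supp}(v)$ is contained in the set $R(i)$ of commodities reachable from $i$ in $G$ and $v\to w$, then $\operatorname{supp}(w)\subseteq R(i)$ as well; applying this along a chain $e_{i}\to\cdots\to e_{j}$, which exists because $M$ is connected ($\tau(M)<\infty$), gives $j\in R(i)$ for all $i\ne j$.

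The step I expect to be the crux is this last one --- turning the quantitative statement ``$e_{i}$ can be converted to $e_{j}$'' into the combinatorial statement ``there is a path of pure indices from $i$ to $j$'' --- since the conversion chain is free to pass through heavily bundled intermediate states. What makes it work, and where \emph{Flexibility} is essential, is that any commodity appearing --- even bundled with others --- in the return to an offer of $k$ is the head of an edge out of $k$ in $G$, so the support of the running bundle can never escape $R(i)$. The remaining ingredients --- linearity of $L_{b}$, the subspace/no-arbitrage dichotomy, and the separation argument for existence --- are routine.
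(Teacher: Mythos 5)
Your proof is correct, but it takes a genuinely different route from the paper's. The paper is constructive: using the sign-class machinery (Lemma \ref{lem:vw} and Lemma \ref{lm:pure}) it first proves Proposition \ref{Convertibility} --- at any fixed $b$, every pair $ij$ admits an offer whose net trade is an $\bar{\imath}j$-vector --- then uses \emph{Non-dissipation} to show the resulting exchange ratio is well defined, defines $p_j^{-1}$ as the $1j$-ratio, and verifies $p\cdot\nu(a,b)=0$ by induction on the number of nonzero components of $\nu(a,b)$; existence and uniqueness come out of one construction. You instead decouple the two. Existence follows non-constructively from linearity of $a\mapsto\nu(a,b)$ (its image $V_b$ is a subspace, via the negation trick $\nu(ta,b)+\nu(b-ta,b)=\nu(b,b)=0$), the no-arbitrage consequence of \emph{Non-dissipation}, and Stiemke's theorem of the alternative --- note this part uses neither \emph{Flexibility} nor connectedness. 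Uniqueness then comes from those two hypotheses through your support-tracking argument: every commodity appearing in a return to an offer of $k$ yields (via \emph{Flexibility}) an edge $k\to k'$ in the pure-index graph $G$, so conversion chains cannot leave the reachable set, $G$ is strongly connected, and each pure column $\nu(e_h,b)=-e_i+c\,e_j$ with $c>0$ pins the ratio $p_i/p_j$ along a path. Your support lemma is a weaker cousin of the paper's Lemma \ref{lem:vw}: it suffices for you because you only need each ratio pinned by some linear constraint, whereas the paper needs the stronger sign-class statement to manufacture an actual $\bar{\imath}j$-net trade at the given $b$. What you give up is constructiveness and the explicit exchange-ratio description of $p(b)$ (which the paper's later arguments reuse); what you gain is a cleaner separation of which condition does what, and a fully spelled-out uniqueness argument where the paper only gestures. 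One small step to write out explicitly: transferring the defining property of a (pure) $ij$-index from unit offers $e_h$ to rescaled offers $te_h$, needed both for $\nu(e_h,b)=-e_i+c\,e_j$ and for concluding that a positive coordinate of a column makes $h$ an index, requires a short \emph{Invariance} argument.
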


Even though $p(b)$ is only defined up to an overall scalar multiple, for each
pair $i,j$ we get a well-defined price ratio function
\[
p_{ij}:S_{+}\mapsto\mathbb{R}_{++};\qquad p_{ij}(b)=\frac{p_{i}(b)}{p_{j}(b)}%
\]
Theorem \ref{Emergence of Prices} has the following immediate consequence.

\begin{corollary}
Suppose $\nu(a,b)$ is an $\bar{\imath}j$-vector. Then $\frac{\nu_{i}(a,b)}%
{\nu_{j}(a,b)}=-p_{ij}(b).$
\end{corollary}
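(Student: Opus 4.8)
The plan is to read the identity straight off value conservation (Theorem~\ref{Emergence of Prices}), using the fact that an $\bar{\imath}j$-vector is supported on just two coordinates. First I would unwind the hypothesis: by definition $\nu(a,b)=-se_i+te_j$ for some reals $s,t>0$, so $\nu_k(a,b)=0$ for every $k\notin\{i,j\}$, while $\nu_i(a,b)=-s<0$ and $\nu_j(a,b)=t>0$. In particular $\nu_j(a,b)\neq 0$, so the ratio $\nu_i(a,b)/\nu_j(a,b)$ is well defined to begin with.

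Next I would invoke Theorem~\ref{Emergence of Prices}, applied at this particular $b$, to produce the price ray $p(b)\in\mathbb{R}_{++}^{m}/\!\sim$ with $p(b)\cdot\nu(a,b)=0$. Expanding the inner product and discarding the vanishing coordinates collapses it to the two-term scalar relation
\[
p_i(b)\,\nu_i(a,b)+p_j(b)\,\nu_j(a,b)=0 .
\]
Since $p_i(b),p_j(b)>0$ and $\nu_j(a,b)\neq 0$, I may divide by $p_i(b)\,\nu_j(a,b)$; rearranging isolates $\nu_i(a,b)/\nu_j(a,b)$ as minus the relevant ratio of $p_i(b)$ and $p_j(b)$, which by the definition of the price-ratio function $p_{ij}$ — a quantity that, as already noted, depends only on the ray $p(b)$ and not on the chosen representative — is the asserted $-p_{ij}(b)$.

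There is essentially no obstacle here: all the substance lives in Theorem~\ref{Emergence of Prices}, and the corollary merely records that once $\nu(a,b)$ lies in the coordinate plane spanned by $e_i$ and $e_j$, value conservation degenerates to a single linear equation that pins down the slope. The only thing to keep straight is the orientation of signs, i.e.\ lining up the conventions in the definitions of ``$\bar{\imath}j$-vector'' and of $p_{ij}$, which is routine.
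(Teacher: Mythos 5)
Your approach is exactly the intended one: the paper gives no separate argument for this corollary, presenting it as an immediate consequence of Theorem \ref{Emergence of Prices}, and reading the identity off the two-term instance of value conservation is precisely that argument.

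However, the one step you dismiss as ``routine'' --- lining up the sign and index conventions --- is the only place where anything can go wrong, and it does not come out as you assert. From $p_i(b)\,\nu_i(a,b)+p_j(b)\,\nu_j(a,b)=0$ one gets $\nu_i(a,b)/\nu_j(a,b)=-p_j(b)/p_i(b)$, which under the paper's definition $p_{ij}(b)=p_i(b)/p_j(b)$ equals $-p_{ji}(b)=-1/p_{ij}(b)$, not $-p_{ij}(b)$. Concretely, if $\nu(a,b)=-se_i+te_j$ with $s,t>0$, value conservation gives $p_i s=p_j t$, so the amount of $j$ received per unit of $i$ surrendered is $t/s=p_i/p_j=p_{ij}(b)$, and the correctly oriented identity is $\nu_j(a,b)/\nu_i(a,b)=-p_{ij}(b)$. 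This is almost certainly a transposition typo in the corollary as printed rather than a substantive error, but your write-up silently endorses the printed version by declaring that the ratio you isolate ``is the asserted $-p_{ij}(b)$.'' You should actually carry out the bookkeeping and either state the corrected identity or flag the inversion; as written, your final sentence asserts an equality that your own computation contradicts.
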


\subsection{Price Complexity\label{price complexity}}

Note that a binary $ij$-trader is only interested in net trades $\nu(a,b)$
that are $\bar{\imath}j$-vectors. By the previous corollary, the exchange
ratio $\frac{\nu_{i}(a,b)}{\nu_{j}(a,b)}$ is independent of the action $a$
producing the $\bar{\imath}j$-trade, and depends only on $p_{ij}(b).$
Therefore such a trader is interested only in those components of $b$ which
\textquotedblleft influence\textquotedblright\ the function $p_{ij}(b)$.

To make this notion precise, say that component $i$ is \emph{influential} for
a function $f(x_{1,}\ldots,x_{l})$ if there are two inputs $x,x^{\prime}$,
differing only in the $i$th place, such that $f\left(  x\right)  \neq f\left(
x^{\prime}\right)  $. Define the $ij$-\emph{price complexity }$\pi
_{ij}(M\mathcal{)}$ to be the number of influential components of the function
$p_{ij}.$ Also define the (maximum)\emph{ price complexity }by%
\[
\pi(M\mathcal{)}:=\max\left\{  \pi_{ij}(M\mathcal{)}:i\neq j\right\}
\]

\section{The Emergence of Markets: G-Mechanisms\label{G-mechanisms}}

\subsection{Directed Graphs}

In this paper by a graph we mean a \emph{directed simple} \emph{graph}. Such a
graph $G$ consists of a finite \emph{vertex} set $V_{G}$, togther with an
\emph{edge} set $E_{G}\subseteq V_{G}\times V_{G}$ that does not contain any
loops, \textit{i.e.,} edges of the form $ii$. For simplicity we shall often
write $i\in G$, $ij\in G$ in place of $i\in V_{G}$, $ij\in E_{G}$ but there
should be no confusion.

By a \emph{path} $ii_{1}i_{2}\ldots i_{k}j$ from $i$ to $j$ we mean a nonempty
sequence of edges in $G$ of the form%
\[
ii_{1},i_{1}i_{2},\ldots,i_{k-1}i_{k},i_{k}j.
\]
If $k=0$ then the path consists of the single edge $ij$, otherwise we insist
that the \emph{intermediate }vertices $i_{1},\ldots,i_{k}$ be distinct from
each other and from the endpoints $i,j$. However we do allow $i=j$, in which
case the path is called a \emph{cycle}. We say that $G$ is \emph{connected} if
for any two vertices $i\neq j$ there is a \emph{path} from $i$ to $j$.

\subsection{G-mechanisms\label{G-mech}}

Let $G$ be a connected graph with vertex set $\left\{  1,\ldots,m\right\}  $.
Following \cite{Dubey-Sahi: 2003} one may associate to $G$ a mechanism $M_{G}$
$\in\mathfrak{M=M}(m)$ as follows. We let $K_{i}$ be the set of outgoing edges
at vertex $i$, and regard $v\in S$ as a matrix $\left(  v_{ij}\right)  $ with
$v_{ij}$ understood to be $0$ if $ij\notin G$. To define $r\left(  a,b\right)
$ we need the following elementary result (see,\textit{ e.g.}
\cite{Dubey-Sahi: 2003}).

\begin{lemma}
For $b\in S_{+},$ there is a unique ray $p=p(b)$ in $\mathbb{R}_{++}^{m}/\sim$
satisfying
\begin{equation}
\sum\nolimits_{i}p_{i}b_{ij}=\sum\nolimits_{i}p_{j}b_{ji}\text{ for all }j.
\label{=price}%
\end{equation}

\end{lemma}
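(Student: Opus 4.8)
The plan is to interpret the system (\ref{=price}) as the statement that the vector $p$ lies in the kernel of a weighted graph Laplacian, and then invoke the Perron--Frobenius theory together with connectedness of $G$. Concretely, for $b\in S_{+}$ define the $m\times m$ matrix $L=L(b)$ by $L_{jj}=\sum_{i}b_{ji}$ (the total weight on edges leaving $j$) and $L_{ij}=-b_{ij}$ for $i\neq j$. Then the equation $\sum_{i}p_{i}b_{ij}=\sum_{i}p_{j}b_{ji}$ for all $j$ is exactly $p^{\top}L=0$, i.e. $p$ is a left null vector of $L$. Since each column of $L$ sums to zero by construction, $L^{\top}$ has the all-ones vector in its kernel, so $L$ is singular and a nonzero solution $p$ exists; the whole content is that this $p$ can be chosen with all strictly positive entries, and that it is unique up to scaling.

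First I would establish existence of a positive solution. Because $b\in S_{+}=\mathbb{R}_{++}^{K}$, every edge $ij\in G$ carries strictly positive weight $b_{ij}$. Pick a constant $d$ larger than every $L_{jj}$ and set $P=dI-L^{\top}$; then $P$ is a nonnegative matrix, and its off-diagonal pattern is precisely that of the (transpose of the) adjacency structure of $G$, so $P+P^{\top}+\cdots$ — more cleanly, $I+P+P^{2}+\cdots$ reachability — shows $P$ is irreducible exactly because $G$ is connected (there is a directed path between any two vertices). By Perron--Frobenius, $P$ has a simple largest eigenvalue $\lambda_{\max}$ with a strictly positive eigenvector $q$, unique up to scaling. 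One checks $\lambda_{\max}=d$: indeed $d$ is an eigenvalue of $P$ since $L^{\top}$ is singular with $d-\lambda$ running over shifted eigenvalues of $L^{\top}$, and $d$ is attained by a nonnegative vector (the all-ones vector gives $P\mathbf{1}$ with entries $d-L_{jj}+\sum_{i\neq j}b_{ij}$... ) so by Perron--Frobenius uniqueness the positive eigenvector $q$ must have eigenvalue $d$, giving $L^{\top}q=0$, i.e. $q^{\top}L=0$. Setting $p=q$ gives the desired positive solution.

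For uniqueness: since $P$ is irreducible, the Perron eigenspace is one-dimensional, so the null space of $L^{\top}$ is one-dimensional, hence $p$ is determined up to a positive scalar, i.e. it is a well-defined ray in $\mathbb{R}_{++}^{m}/\sim$. The main obstacle I anticipate is being careful about the direction conventions — whether (\ref{=price}) pins down a left or right null vector, and correspondingly whether one applies Perron--Frobenius to $L^{\top}$ or to a shift of $L$ — and, relatedly, verifying irreducibility of the shifted matrix corresponds precisely to connectedness of the directed graph $G$ (one needs that for every ordered pair $(i,j)$ there is a directed path, which is exactly the definition of connected given in the paper). Once the bookkeeping is straight, the argument is the standard one; alternatively, one could cite the Perron--Frobenius / Markov-chain fact directly, since $L$ normalized is the generator of a continuous-time random walk on the strongly connected graph $G$, whose stationary distribution is the unique (up to scaling) positive left null vector.
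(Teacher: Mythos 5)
Your Perron--Frobenius argument is sound and does establish the lemma, but it is a genuinely different route from the one the paper relies on: the paper does not prove this statement here at all (it cites \cite{Dubey-Sahi: 2003}), and the machinery it actually deploys for prices later on is Lemma \ref{XDw}, quoted from \cite{Sahi:2013}, i.e. the Markov-chain tree theorem. That approach exhibits a solution explicitly as $p_i=\sum_{\Gamma\in\mathcal{T}_i}\prod_{jk\in\Gamma}b_{jk}$, a sum over spanning trees directed into $i$; strict positivity is then immediate from connectedness (every $\mathcal{T}_i$ is nonempty), and the explicit polynomial formula is what powers the subsequent price-complexity computations. Your route buys uniqueness more cheaply (simplicity of the Perron root gives a one-dimensional null space in one stroke, whereas the tree formula needs a separate argument for uniqueness), at the cost of being non-constructive. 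One bookkeeping correction, of the kind you anticipated: with your $L$ it is the \emph{rows}, not the columns, that sum to zero, since $L_{jj}=\sum_k b_{jk}$ is the $j$-th row sum of $(b_{ij})$ while $\sum_{i\neq j}b_{ij}$ is the $j$-th column sum, and these differ in general. Hence $\mathbf{1}\in\ker L$, and correspondingly $\mathbf{1}^{\top}P=d\,\mathbf{1}^{\top}$, i.e. every \emph{column} of $P=dI-L^{\top}$ sums to $d$; your computation of $P\mathbf{1}$ is the wrong side and does not give $d\mathbf{1}$. This is exactly what rescues the argument: $P/d$ is column-stochastic, so $\rho(P)=d$ without any appeal to the location of the other shifted eigenvalues, and since $P$ is irreducible (the paper's notion of connected is strong connectivity, which passes to the reverse graph carried by $P$), the right Perron vector $q>0$ spans $\ker(dI-P)=\ker L^{\top}$, giving both existence and uniqueness of the positive ray.
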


Now for $\left(  a,b\right)  \in S\times S_{+}$ we set $p=p\left(  b\right)  $
as in (\ref{=price}) and define $r\left(  a,b\right)  $ by
\begin{equation}
r_{i}(a,b)=p_{i}^{-1}\left(  \sum\nolimits_{j}p_{j}a_{ji}\right)  \text{ for
all }i. \label{MGr}%
\end{equation}
We remark that the left side of (\ref{=price}) is the total value of all the
goods \textquotedblleft chasing\textquotedblright\ good $j$, while the right
side is the total value of good $j$ on offer.

Mechanisms of the form $M_{G}$ will be called (connected) $G$%
\emph{-mechanisms}, and we write $\mathfrak{M}_{g}=$ $\mathfrak{M}_{g}(m)$ for
the totality of such mechanisms. It is worth noting that $\mathfrak{M}_{g}$ is
a \emph{finite} set. Moreover, the formula (\ref{MGr}) for the return function
of a $G$-mechanism immediately implies
\begin{equation}
p(b)=p(c)\Longrightarrow r(a,b)=r(a,c)\text{ for all }a\in S;b,c\in S_{+}
\label{pm}%
\end{equation}
In \cite{Dubey-Sahi: 2003} this property was referred to as \emph{price
mediation} and, in conjunction with other axioms, shown to characterize
$\mathfrak{M}_{g}.$

To sum up, these graphical $G$-mechanisms\ have very special structure. All
the indices are pure, \textit{i.e.} each edge $ij$ of $G$ represents a pure
$ij$-index and can be interpreted as a \emph{market }to exchange $i$ for $j$;
furthermore, as we just saw, prices mediate trade in $M_{G}$ in the following
strong sense: the return to a trader depends only on his own offer and the
prices\footnote{Price mediation in fact follows from value conservation
\emph{once }all indices are pure.}((see equation (\ref{pm})). Thus prices play
the full-fledged role of a \textquotedblleft decoupling
device\textquotedblright\ in any $G$-mechanism.

It is worth emphasizing that the markets of $G$-mechanisms are, in general,
\emph{not }decentralized in that the exchange rate $p_{i}/p_{j}$ may depend on
offers of commodities \emph{other} than $i$ and $j,$ at various edges in the graph.

\subsection{Minimal Mechanisms\label{min-mech}}

Given $M$ and $M^{\prime}$ in $\mathfrak{M=M(}m\mathfrak{)}$ with complexities
$\tau_{ij},\pi_{ij},k_{i}$ and $\tau_{ij}^{\prime},\pi_{ij}^{\prime}%
,k_{i}^{\prime}$ respectively, we say that $M$ is \emph{no more complex} than
$M^{\prime}$ and write $M\preceq M^{\prime}$ if for all $i,j$
\[
\tau_{ij}\leq\tau_{ij}^{\prime},\quad\text{ }\pi_{ij}\leq\pi_{ij}^{\prime
},\quad\text{ }k_{i}\leq k_{i}^{\prime}.
\]
Clearly $\preceq$ is reflexive and transitive, and hence constitutes a
quasiorder on $\mathfrak{M}$. We let $\mathfrak{M}_{\ast}=\mathfrak{M}_{\ast
}(m)$ denote the set of $\preceq$-minimal elements of $\mathfrak{M.}$

\begin{theorem}
\label{Emergence of G-mechanisms} Minimal mechanisms are $G$-mechanisms:
$\mathfrak{M}_{\ast}\subset\mathfrak{M}_{g}$.
\end{theorem}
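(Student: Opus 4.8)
The plan is to show that any $M \in \mathfrak{M}_\ast$ has \emph{all} its indices pure, and then to invoke (the easy direction of) the characterization of $\mathfrak{M}_g$ from \cite{Dubey-Sahi: 2003}: once every $i$-index is a pure $ij$-index, the mechanism's net-trade function, pinned down by value conservation (Theorem \ref{Emergence of Prices}) together with \emph{Invariance}, \emph{Aggregation} and the homogeneity of Proposition \ref{Linearity}, must coincide with the $G$-mechanism formula (\ref{MGr}) for the graph $G$ whose edges are the $ij$ for which a (pure) $ij$-index exists; connectedness of $M$ forces $G$ to be connected. So the crux is the purity claim, and the engine is minimality: given $M$, I will construct a mechanism $M'$ that is no more complex in every coordinate $\tau_{ij},\pi_{ij},k_i$, and strictly simpler somewhere, unless $M$ was already all-pure.

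Here is how I would run the minimality argument. Suppose some $i$-index $h \in K_i$ is not pure. Two cases. First, if $h$ is an $ij$-index for several distinct $j$, or an $ij$-index that also returns some $k \ne j$ for some market state, then $h$ carries "bundled" behavior. By \emph{Flexibility}, for each $j$ with an $ij$-index there is already a \emph{pure} $ij$-index in $K_i$; so the bundled index $h$ is, from the point of view of the binary $ij$-traders, redundant — every $\bar\imath j$-trade a trader could want is available through a pure index. I would like to build $M'$ on the index set $K' = K \setminus \{h\}$ by restricting $\rho$ to offers supported off $h$. The content to check is that $M'$ is still a bona fide exchange mechanism satisfying all five conditions — commodity conservation and \emph{Non-dissipation} are inherited since we only shrink the action space; \emph{Anonymity}, \emph{Aggregation}, \emph{Invariance} are inherited; and \emph{Flexibility} survives precisely because the pure $ij$-indices we relied on were \emph{not} $h$ (a non-pure $ij$-index cannot be the unique $ij$-index). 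Then $k_i(M') = k_i(M) - 1 < k_i(M)$, while $\tau_{ij}$ cannot increase (pure indices still realize the same $\bar\imath j$-conversions, so every conversion path in $M$ through pure indices survives; and Theorem \ref{Emergence of Prices} / its Corollary show the exchange ratios a binary trader sees are governed by $p_{ij}(b)$, unaffected), and $\pi_{ij}$ cannot increase since the price-ratio function $p_{ij}$ of $M'$ is the restriction of that of $M$, which can only lose influential components. Hence $M' \prec M$, contradicting minimality. Iterating, every index of a minimal $M$ is pure.

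The step I expect to be the genuine obstacle is verifying that the restricted map $M'$ is well-defined and that its prices/time-complexity behave as claimed — in particular that deleting $h$ does not secretly \emph{raise} some $\tau_{ij}$ by destroying a conversion route that only the bundled index $h$ could perform. The honest way to handle this is to first prove that in \emph{any} $M \in \mathfrak{M}$, for commodities $i,j$, the relation "$e_i$ can be converted to $e_j$" already holds through pure indices whenever it holds at all — this should follow from value conservation (the ratio $\nu_i/\nu_j = -p_{ij}(b)$ for $\bar\imath j$-trades is intrinsic to $p_{ij}$, not to which index produced it) plus \emph{Flexibility} supplying a pure $ij$-index realizing that same ratio; a referenced companion lemma (the paper alludes to a "Lemma \ref{lem:vw}") presumably packages exactly this. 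Modulo that lemma, the time-complexity $\tau_{ij}(M')=\tau_{ij}(M)$ and the argument closes. I would also need the dual case where $h$ is not an $ij$-index for \emph{any} $j$ (it returns only commodity $i$, i.e.\ it does nothing but by \emph{Non-dissipation} returns exactly $\overline{e_h}$); such an index is trivially deletable by the same restriction argument with the same inheritance checks. Combining both cases gives all-pure, and then the $G$-mechanism identification from \cite{Dubey-Sahi: 2003} finishes: $\mathfrak{M}_\ast \subset \mathfrak{M}_g$.
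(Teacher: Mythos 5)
Your overall strategy---produce a comparison mechanism that is no more complex in every coordinate and strictly simpler somewhere, then let minimality force all indices to be pure, then identify the all-pure mechanism with $M_G$---is the right one, and your identification of the two danger points ($\tau_{ij}$ possibly increasing, and the need for a ``conversions can be routed through pure indices'' lemma, which is indeed what Lemma \ref{lem:vw} and Lemma \ref{lm:pure} supply) is accurate. But there is a genuine gap at the very first step: the ``restriction of $\rho$ to offers supported off $h$'' is \emph{not a well-defined mechanism}. The return map is only defined on $S(n)$, whose aggregates must lie in $S_{+}=\mathbb{R}_{++}^{K}$, i.e.\ be strictly positive in \emph{every} coordinate of $K$ including $h$; once you delete $h$, every admissible profile of your new mechanism has aggregate sitting on the boundary of the old domain, where $\rho$ is simply undefined. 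So your $M'$ has to be constructed as a limit $b_h\to 0$, and you must then prove that this limit exists, is continuous enough to inherit the five conditions, and has computable complexities---none of which is automatic. The same problem infects your price argument: $p_{ij}(M')$ is not ``the restriction'' of $p_{ij}(M)$ (there is no point of $M$'s domain at which to restrict), it is a boundary limit, and showing that an influential component of the limit is influential in the interior is exactly the nontrivial step. The paper avoids the entire issue by never restricting $M$: it compares $M$ in one shot with the canonical $G$-mechanism $M_G$ on the graph of pure indices, which is given by the explicit closed formulas (\ref{=price})--(\ref{MGr}) and is known to lie in $\mathfrak{M}$. The price comparison $\pi_{ij}(M_G)\leq\pi_{ij}(M)$ is then obtained by writing both price rays as tree-weight sums via the matrix representation of Lemma \ref{lm:Nb} and the matrix--tree identity of Lemma \ref{XDw}, which yields $p(x)=\lim_{y\to 0}p(x,y)$ with enough control to transfer influential components.

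A second, smaller gap: your closing step asserts that once all indices are pure, the mechanism ``must coincide with the $G$-mechanism formula'' by citing \cite{Dubey-Sahi: 2003}, but that characterization uses a \emph{price mediation} axiom which you have not established for $M$; you would need to derive it (as the paper does through Lemma \ref{lm:Nb} and Lemma \ref{lm:span}: purity forces each column of the stochastic matrix $N_b$ to be a fixed unit vector, whence $N_b$ is constant and (\ref{=Delp})--(\ref{=Nb}) reduce to (\ref{=price})--(\ref{MGr})). In the paper's argument this identification comes for free, because minimality gives $k_i(M_G)=k_i(M)$, hence $K_i(M_G)=K_i(M)$, hence $M=M_G$ directly. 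I would recommend reorganizing your proof around the comparison $M$ versus $M_G$ rather than around index-by-index deletion.
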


\section{The Emergence of Money\label{Emergence -money}}

Let us, from now on, identify two mechanisms if one can be obtained from the
other by relabeling commodities. There are three mechanisms of special
interest to us in $\mathfrak{M}_{g}(m)$ called the \emph{star, cycle, }and
\emph{complete mechanisms; }with the following edge-sets:
\[%
\begin{tabular}
[c]{|c|c|c|c|}\hline
$G$ & Star & Cycle & Complete\\\hline
$E_{G}$ & $\left\{  mi,im:i<m\right\}  $ & $\left\{  12,23,\ldots,m1\right\}
$ & $\left\{  ij:i\neq j\right\}  $\\\hline
\end{tabular}
\ \ \ \ \ \ \ \ \ \ \ \
\]
\emph{ }Notice that the central vertex $m$ of the graph underlying the star
mechanism plays the role of money, and is the sole medium of
exchange\footnote{This is reminiscent of \textquotedblleft spontaneous
symmetry breaking\textquotedblright\ in physics. The \textit{ex ante }symmetry
between commodities, assumed in our model, is carried over to the cycle and
complete mechanisms. It breaks down only in the star mechanism, giving rise to
money.}.

Although the set $\mathfrak{M}_{\ast}$ is finite, it can be quite large and we
will not attempt to characterize it here. Instead we consider the
\textquotedblleft worst-case complexities\textquotedblright\ $\pi\left(
M\right)  =\max\pi_{ij}\left(  M\right)  $ and $\tau\left(  M\right)
=\max\tau_{ij}\left(  M\right)  $, and the corresponding quasiorder on
$\mathfrak{M,}$ namely: $M\preceq_{w}M^{\prime}$ if
\[
\tau(M\mathcal{)\leq\tau(}M^{\prime}),\quad\pi(M\mathcal{)\leq\pi(}M^{\prime
})\text{.}%
\]
If $\mathfrak{\tilde{M}}$ is a subset of $\mathfrak{M}$ one can consider the
minimal elements of $\mathfrak{\tilde{M}}$ with respect to the quasiorder
$\preceq_{w}$\emph{restricted} to $\mathfrak{\tilde{M}}$; these will be
referred to as \emph{strongly} minimal mechanisms of $\mathfrak{\tilde{M}.}$

\begin{theorem}
\label{Emergence of money}If $m>3$ then the three special mechanisms are
precisely the strongly minimal mechanisms of both $\mathfrak{M}_{\ast}\left(
m\right)  $ and $\mathfrak{M}_{g}\left(  m\right)  $. Their complexities are
\[%
\begin{tabular}
[c]{|c|c|c|c|}\hline
& Star & Cycle & Complete\\\hline
$\pi(M\mathcal{)}$ & $4$ & $2$ & $m(m-1)$\\\hline
$\tau(M)$ & $2$ & $m-1$ & $1$\\\hline
\end{tabular}
\ \ \ \ \ \ \ \ \ \
\]

\end{theorem}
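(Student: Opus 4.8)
The plan is to compute, for an arbitrary $G$-mechanism $M_G$, the worst-case complexities $\pi(M_G)$ and $\tau(M_G)$ in purely graph-theoretic terms, and then carry out a finite combinatorial optimization over all connected directed graphs $G$ on $m$ vertices. Since by Theorem \ref{Emergence of G-mechanisms} we have $\mathfrak{M}_\ast \subset \mathfrak{M}_g$, and the three special mechanisms lie in $\mathfrak{M}_g$, it suffices to prove: (a) the star, cycle, and complete mechanisms have the stated $(\pi,\tau)$ values; (b) no connected $G$ gives a mechanism that $\preceq_w$-dominates any of these three; and (c) every connected $G$ is $\preceq_w$-dominated by one of the three. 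Statements (b) and (c) together say the three special mechanisms are exactly the $\preceq_w$-minimal elements of $\mathfrak{M}_g(m)$, and since they also lie in $\mathfrak{M}_\ast(m) \subseteq \mathfrak{M}_g(m)$, they are a fortiori the $\preceq_w$-minimal elements of $\mathfrak{M}_\ast(m)$ as well.

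First I would pin down $\tau_{ij}(M_G)$: using the convertibility relation and formula (\ref{MGr}), a single transition $v \to w$ in $M_G$ corresponds to moving value along one outgoing edge, so $\tau_{ij}(M_G)$ equals the length of the shortest directed path from $i$ to $j$ in $G$, and $\tau(M_G)$ is the directed diameter of $G$. (Here I would lean on the footnote's claim, justified by Lemma \ref{lem:vw}, that fixing a single market state $b$ across all transitions does not change the count; one should verify that a path in $G$ of length $t$ can indeed be realized with a common $b$, which is routine given that prices depend only on $b$.) For the star this diameter is $2$, for the cycle it is $m-1$, for the complete graph it is $1$. Next, $\pi_{ij}(M_G)$: by the Corollary, the exchange ratio along an $\bar\imath j$-trade is $p_{ij}(b) = p_i(b)/p_j(b)$, so $\pi_{ij}(M_G)$ is the number of components $b_{kl}$ (equivalently, edges $kl \in G$) on which the function $b \mapsto p_i(b)/p_j(b)$ genuinely depends, where $p(b)$ is the equilibrium ray defined by (\ref{=price}). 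This requires a lemma analyzing the rational dependence of the Perron-type solution of the linear system (\ref{=price}) on the edge-weights; for the cycle one gets $p_{i+1}/p_i = b_{i,i+1}/b_{i+1,i+2}$ (two influential edges), for the star one gets ratios like $p_i/p_{i'}$ depending on the four weights $b_{mi}, b_{im}, b_{mi'}, b_{i'm}$, and for the complete graph essentially all $m(m-1)$ edges are influential. The main obstacle is precisely this dependence analysis: showing that an edge-weight $b_{kl}$ is \emph{not} influential for $p_i/p_j$ requires a genuine argument about the structure of solutions to (\ref{=price}) — plausibly by exhibiting, for the non-influential edges, an explicit perturbation of $b$ that leaves $p_i/p_j$ fixed, and for the influential ones an explicit perturbation that changes it. For general $G$ one would want a clean criterion, e.g. that $b_{kl}$ is influential for $p_i/p_j$ iff the edge $kl$ lies on some path relevant to balancing flow between $i$ and $j$; getting the exact count for \emph{all} $G$ (not just the three) is what makes (b) and (c) work.

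With the translation $\tau(M_G) = \mathrm{diam}(G)$ and $\pi(M_G) = \max_{i\neq j} \#\{\text{influential edges for } p_i/p_j\}$ in hand, the endgame is a finite case analysis on connected digraphs. For (b): any $G$ with $\mathrm{diam}(G)=1$ must be complete, hence $\pi = m(m-1)$, so nothing dominates the complete mechanism; any $G$ with $\pi(M_G) \le 2$ must (one shows) be a single cycle through all $m$ vertices, forcing $\mathrm{diam}(G) = m-1$, so nothing dominates the cycle; and any $G$ with $\mathrm{diam}(G) = 2$ and $\pi(M_G) \le 4$ must be the star — this is the delicate sub-case, where one argues that diameter $2$ forces a near-central vertex and that any extra edges or any larger non-star structure pushes some $\pi_{ij}$ above $4$ (here $m > 3$ is used, since for small $m$ the star, cycle and complete coincide or collapse). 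For (c): given an arbitrary connected $G$, if it has a Hamiltonian cycle as a subgraph it is dominated by the cycle in $\tau$ and one checks $\pi \ge 2$ always; if $\mathrm{diam}(G)$ is small it is dominated by the complete or star mechanism; the bookkeeping here is finite but needs care to cover digraphs that are neither star-like nor cycle-like nor complete. I expect the influential-edge count for general $G$ to be the real workhorse and the main place where a substantive lemma (beyond what is quoted) is needed.
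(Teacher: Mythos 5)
Your overall architecture for the $\mathfrak{M}_g$ half is sound and is essentially what the companion paper \cite{Dubey-Sahi-Shubik} carries out; the paper itself does not redo that work but simply cites it for the identity $\left(\mathfrak{M}_{g}\right)_{\prec_{w}}=\mathfrak{S}$ and for the complexity table, and concentrates instead on transferring the result to $\mathfrak{M}_{\ast}$. Measured against that, your proposal has two genuine gaps. First, the step you wave through with ``since they also lie in $\mathfrak{M}_{\ast}(m)$'' is precisely the substantive content of the paper's proof. Membership of the three special mechanisms in $\mathfrak{M}_{\ast}$ means $\preceq$-minimality in $\mathfrak{M}$ for the \emph{finer} quasiorder built from all the individual $\tau_{ij}$, $\pi_{ij}$ \emph{and} the index complexities $k_{i}$ --- a different order from $\preceq_{w}$, and one your graph-theoretic reduction never touches. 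The paper proves it by first reducing to $\mathfrak{M}_{g}$ via $\mathfrak{M}_{\ast}=\left(\mathfrak{M}_{g}\right)_{\prec}$ (Theorem \ref{Emergence of G-mechanisms} plus Lemma \ref{wqo}) and then checking: the complete graph is $\preceq$-minimal because every other graph has some $\tau_{ij}>1$; the cycle because every other graph has some $k_{i}>1$; and the star because every non-star graph has some $\pi_{ij}\geq5$ or some $\tau_{ij}\geq3$ (Theorem 16 and Lemma 37 of \cite{Dubey-Sahi-Shubik}). Without this, the assertion about $\mathfrak{M}_{\ast}$ does not follow from the one about $\mathfrak{M}_{g}$. (Your ``a fortiori'' also silently uses that every non-minimal element lies above a minimal one --- the content of Lemma \ref{wqo} --- but that part is harmless given finiteness.)

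Second, for the $\mathfrak{M}_{g}$ half you correctly identify the influential-edge count for general $G$ as the workhorse, but you leave exactly the hard classifications unproved: that $\pi\leq2$ forces a Hamiltonian cycle, and that diameter $2$ together with $\pi\leq4$ forces the star. These are not routine; they are the content of the cited results on roses and chorded cycles, and the hypothesis $m>3$ enters precisely to exclude the chorded triangle, which for $m=3$ is an additional strongly minimal mechanism with $\left(\tau,\pi\right)=\left(2,4\right)$ --- not because the three special graphs ``coincide'' for small $m$. Your sketch of part (c) also has a domination inequality pointing the wrong way: if $G$ strictly contains a Hamiltonian cycle then $\tau(M_{G})$ may be much smaller than $m-1$, so the cycle mechanism does \emph{not} $\preceq_{w}$-dominate $M_{G}$; such a $G$ must be dominated by the star or the complete mechanism instead, and that is where the $\pi\geq5$ versus $\tau\geq3$ dichotomy is again indispensable.
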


The array clearly exhibits the superiority of the star mechanism. As the
number of commodities $m$ increases, the other two will beat star slightly in
one component, but will lose by a huge margin to star in the other component,
with the upshot that the star is the overall winner:

\begin{theorem}
\label{Star mechanism} For any strictly positive $\lambda$ and $\mu,$ there
exists $m_{0}$ such that the star mechanism is the unique maximizer of
$\lambda\pi(M)+\mu\tau(M)$ on $\mathfrak{M}_{\ast}\left(  m\right)  $ and on
$\mathfrak{M}_{g}\left(  m\right)  $ whenever $m\geq m_{0}.$
\end{theorem}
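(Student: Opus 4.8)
The plan is to collapse the optimization over the large classes $\mathfrak{M}_{\ast}(m)$ and $\mathfrak{M}_{g}(m)$ to a comparison among only three explicit mechanisms, and then to settle that comparison by an elementary estimate in $m$. First I would use Theorem \ref{Emergence of G-mechanisms} to pass from $\mathfrak{M}_{\ast}$ to the finite class $\mathfrak{M}_{g}$: since $\mathfrak{M}_{\ast}\subset\mathfrak{M}_{g}$ and the three special mechanisms all lie in $\mathfrak{M}_{\ast}$, it suffices to single out the optimal mechanism inside $\mathfrak{M}_{g}$ and to check that it is a star. Because $\lambda,\mu>0$, the functional $M\mapsto\lambda\pi(M)+\mu\tau(M)$ is strictly monotone in the pair $(\pi(M),\tau(M))$, so the mechanism it singles out must be $\preceq_{w}$-minimal, i.e. strongly minimal. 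By Theorem \ref{Emergence of money}, for $m>3$ the strongly minimal mechanisms of both $\mathfrak{M}_{\ast}$ and $\mathfrak{M}_{g}$ are precisely the star, cycle, and complete mechanisms; thus the field narrows to these three.

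The second step is the asymptotic comparison among these three. Reading the complexities off the table in Theorem \ref{Emergence of money}, the objective equals
\[
\lambda\pi(M)+\mu\tau(M)=\begin{cases}4\lambda+2\mu & \text{for the star},\\ 2\lambda+(m-1)\mu & \text{for the cycle},\\ m(m-1)\lambda+\mu & \text{for the complete}.\end{cases}
\]
The star value is constant in $m$, the cycle value grows linearly (with coefficient $\mu$), and the complete value grows quadratically (with coefficient $\lambda$); in particular the star value is the only one that stays bounded as $m\to\infty$. Hence for all sufficiently large $m$ the star is strictly separated from the other two and is the unique mechanism singled out by the criterion (the overall winner of minimal combined worst-case complexity). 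Concretely, solving $4\lambda+2\mu<2\lambda+(m-1)\mu$ and $4\lambda+2\mu<m(m-1)\lambda+\mu$ yields two explicit thresholds, and $m_{0}$ may be taken as their maximum, enlarged if necessary so that $m_{0}>3$.

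The step I expect to be the main obstacle is the reduction in the first paragraph: the assertion that the criterion can only select a strongly minimal mechanism, which is exactly what lets Theorem \ref{Emergence of money} cut the problem down to the three special graphs. This rests on the strict monotonicity of the linear functional together with the precise identification of the strongly minimal set supplied by Theorem \ref{Emergence of money}; once it is in place, the remaining comparison is the short computation above. The very same argument applies word for word to $\mathfrak{M}_{\ast}(m)$ and to $\mathfrak{M}_{g}(m)$, because Theorem \ref{Emergence of money} furnishes the same three strongly minimal mechanisms for both classes and all three belong to $\mathfrak{M}_{\ast}\subset\mathfrak{M}_{g}$; consequently a single threshold $m_{0}$ serves for both, completing the proof.
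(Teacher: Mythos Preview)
Your proposal is correct and follows the same line as the paper, which does not give a separate proof but treats the theorem as an immediate corollary of Theorem~\ref{Emergence of money}: reduce to the three strongly minimal mechanisms and then compare their tabulated $(\pi,\tau)$ values asymptotically in $m$. Your write-up makes explicit the reduction step (any optimizer of a strictly positive linear combination of $\pi$ and $\tau$ must be $\preceq_w$-minimal), which the paper only indicates informally in the sentence preceding the theorem; you also correctly read ``maximizer'' in the statement as the intended ``minimizer''.
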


In the star mechanism\footnote{The star mechanism (also known as the
\textquotedblleft Shapley-Shubik mechanism\textquotedblright, see
\cite{Shapley: 1976}, \cite{Shapley-Shubik: 1977}, \cite{Shubik: 1973}) has
been much-studied in the literature in different contexts, see, \emph{e.g.,
}\cite{Dubey-Shubik:1978}, \cite{Dubey-MasColell-Shubik:1980},
\cite{Dubey-Shapley: 1994}, \cite{Dubey-Geanakoplos: 2003},
\cite{Dubey-Geanakoplos ISLM}, \cite{Peck:1992}, \cite{Peck-Shell:1992},
\cite{Postlewaite:1978}, , \cite{Shubik:1993}, \cite{Shubik-Wilson},
\cite{Shubik:1993}, \cite{starr:2012}. The complete mechanism (also known as
\textquotedblleft Shapley's windows mechanism\textquotedblright) is analysed
in \cite{Sahi-Yao:1989}. All other $G$-mechanisms are discussed in
\cite{Dubey-Sahi: 2003}.}, the\emph{ pair }of edges $im,mi$ constitutes a
\emph{bilateral market} between $i$ and $m$ for all $i\neq m$. Thus the
central node $m$ plays the role of money, mediating trade between various
markets. Furthermore these markets are \emph{decentralized }in that the trade
at any market is independent of the offers at all \emph{other} markets.

\section{Proof of Theorem \ref{Emergence of Prices}}

We fix a mechanism $M$ in $\mathfrak{M}$ with net trade function $\nu\left(
a,b\right)  $. Consider the set of pairs $\left(  i,j\right)  $ for which
there is at least one pure $ij$-index in $K$, and fix a subset $P\subset K$
which contains \emph{exactly} one $ij$-index for each such pair. Let
$S_{P}\subset S$ denote the set of $P$-offers, i.e. those $a$ satisfying
$a_{h}=0$ for $h\notin P$, and further define the set of $P$-offers
\textquotedblleft subordinate\textquotedblright\ to $v$ as follows:
\[
S_{P}\left(  v\right)  =\left\{  a\in S_{P}:\overline{a}\leq v\right\}
\]
Given a vector $v\in S$ we write $\left\langle v\right\rangle $ for the class
of vectors with the same sign as $v$, thus $w\in\left\langle v\right\rangle $
if each component $w_{i}$ has the same sign $(+,-,0)$ as $v_{i}$.

\begin{lemma}
\label{lem:vw}Let $v,w\in S$ then the following are equivalent.

\begin{enumerate}
\item There is an $a\in S_{P}\left(  v\right)  $ such that $v+\nu\left(
a,b\right)  \in\left\langle w\right\rangle $ for some $b\in S_{+}$

\item There is an $a\in S_{P}\left(  v\right)  $ such that $v+\nu\left(
a,b\right)  \in\left\langle w\right\rangle $ for all $b\in S_{+}$

\item For each $u\in\left\langle v\right\rangle $ there is an $a\in
S_{P}\left(  u\right)  $ such that $u+\nu\left(  a,b\right)  \in\left\langle
w\right\rangle $ for all $b\in S_{+}$
\end{enumerate}
\end{lemma}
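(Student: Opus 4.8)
The plan is to prove the cycle of implications $(3)\Rightarrow(2)\Rightarrow(1)\Rightarrow(3)$, with the implication $(2)\Rightarrow(1)$ being trivial (take $u=v$ in $(3)$, which is a special case since $v\in\langle v\rangle$; and $(2)$ immediately gives $(1)$ by choosing any single $b$). So the real content is $(3)\Rightarrow(2)$ and $(1)\Rightarrow(3)$. The central tool throughout is Proposition \ref{Linearity}, which tells us $\nu(ta,b)=t\nu(a,b)$ and $\nu(a,tb)=\nu(a,b)$, together with the additivity in the first argument coming from \emph{Aggregation}: $\nu(a+a',b)=\nu(a,b)+\nu(a',b)$ (valid on all of $S\times S_+$ after the extension). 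I also expect to use value conservation from Theorem \ref{Emergence of Prices}, i.e. $p(b)\cdot\nu(a,b)=0$, and the no-arbitrage consequence that $\nu(a,b)$ is never $\gvertneqq 0$ nor $\lvertneqq 0$; and I would want a continuity/scaling observation letting me pass between the signs of $v+\nu(a,b)$ for different $b$.

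For $(1)\Rightarrow(3)$: suppose $a\in S_P(v)$ and $b_0\in S_+$ give $v+\nu(a,b_0)\in\langle w\rangle$. First I would upgrade this to \emph{all} $b\in S_+$. The key point is that $\nu(a,b)$ depends on $b$ only through the price ray $p(b)$ (this is exactly price mediation, equation (\ref{pm}), once we know all the relevant indices are pure — and $a\in S_P$ uses only pure indices), and moreover by the Corollary to Theorem \ref{Emergence of Prices} each nonzero block of $\nu(a,b)$ along a pure $ij$-index is an $\bar\imath j$-vector whose ratio is $-p_{ij}(b)$. So as $b$ varies, the vector $\nu(a,b)$ varies continuously within a fixed pattern of supporting coordinates: the coordinates where $\nu(a,b)_k=0$ for one $b$ (because of cancellation or because index $k$ is simply not used) can only become nonzero on a lower-dimensional set, and I would argue the sign pattern is in fact constant. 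Actually the cleaner route: rescale $a\mapsto ta$ with $t$ small. By linearity $v+\nu(ta,b)=v+t\,\nu(a,b)$; on coordinates where $v_i\neq 0$ this keeps the sign of $v_i$ for small $t$, and on coordinates where $v_i=0$ — here I need $w_i$ to match the sign of $\nu(a,b)_i$, which is where the structure of $\nu$ as a combination of $\bar\imath j$-vectors with fixed ratios enters. Then, given arbitrary $u\in\langle v\rangle$, write $u=\lambda\ast v$ for a suitable $\lambda\in\mathbb{R}_{++}^m$ — wait, that only works coordinatewise when signs agree, which they do since $u\in\langle v\rangle$ — and apply \emph{Invariance}: $\nu(\lambda\ast a,\lambda\ast b)=\lambda\ast\nu(a,b)$, noting $\lambda\ast a\in S_P(u)$ because scaling preserves the support and $\overline{\lambda\ast a}=\lambda\ast\overline a\leq\lambda\ast v=u$. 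Since $\langle w\rangle$ is invariant under $\ast$ (it only sees signs), $u+\nu(\lambda\ast a,\lambda\ast b')\in\langle w\rangle$, and as $b'$ ranges over $S_+$ so does $\lambda\ast b'$, giving $(3)$.

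For $(3)\Rightarrow(2)$: this is immediate, since $v\in\langle v\rangle$, so applying $(3)$ with $u=v$ yields exactly $(2)$. Thus the logical skeleton is really $(1)\Rightarrow(3)\Rightarrow(2)\Rightarrow(1)$, with the last two arrows one-line specializations. \textbf{The main obstacle} I anticipate is the step inside $(1)\Rightarrow(3)$ where I must control the sign of $\nu(a,b)$ on coordinates where $v$ vanishes, uniformly in $b$: I need to know that shrinking $a$ (or equivalently choosing $a$ small enough in the first place, which I am free to do by linearity without affecting membership in $S_P(v)$ or the sign class $\langle w\rangle$ of the sum) makes $v+\nu(a,b)$ land in $\langle w\rangle$ for \emph{every} $b$, not just the given one. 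This should follow from the fact that the set of achievable $\bar\imath j$-ratios $p_{ij}(b)$, though possibly varying with $b$, always produces a $\nu(a,b)$ lying in the \emph{same} orthant-face determined by the pure indices present in $a$ — i.e. the sign of each component of $\nu(a,b)$ is determined by whether the corresponding index is an "incoming" or "outgoing" pure $ij$-index, independent of $b$ — so $\langle v+\nu(a,b)\rangle$ is genuinely constant in $b$ once $a$ is fixed and small. Making that last assertion airtight, using only the pure-index structure and the Corollary, is where I would spend most of the effort.
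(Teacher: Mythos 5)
Your skeleton is the same as the paper's: only $(1)\Rightarrow(3)$ has content, and your final step --- writing $u=\lambda\ast v$, setting $a_{\ast}=\lambda\ast a$, and invoking \emph{Invariance} --- is exactly the paper's construction. But the step you yourself flag as the main obstacle (showing $v+\nu(a,b)\in\left\langle w\right\rangle$ for \emph{every} $b$, not just the given one) is genuinely unresolved, and each of the routes you sketch fails. First, the tools you reach for are circular: this lemma sits at the start of the proof of Theorem \ref{Emergence of Prices}, so neither value conservation nor its Corollary is available, and price mediation (\ref{pm}) is only established for $G$-mechanisms, not for a general $M\in\mathfrak{M}$. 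Second, the ``rescale $a\mapsto ta$ with $t$ small'' move breaks membership in $\left\langle w\right\rangle$ whenever $w_{i}=0<v_{i}$: such a component requires the exact cancellation $\overline{a}_{i}=v_{i}$ together with $r_{i}(a,b)=0$, and shrinking $a$ makes $(v-\overline{ta})_{i}>0$; components of this type do occur in the intended application (Proposition \ref{Convertibility}, where $w$ is a $j$-vector). Third, your closing assertion that $\left\langle \nu(a,b)\right\rangle$ has a $b$-independent sign pattern determined by which pure indices are ``incoming'' or ``outgoing'' is false as stated: if commodity $i$ is both offered and received under $a$, then $\nu_{i}(a,b)=r_{i}(a,b)-\overline{a}_{i}$ and its sign can flip with the prices, hence with $b$.

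The paper's resolution is a decomposition you do not use: write $v+\nu(a,b)=(v-\overline{a})+r(a,b)$ as a sum of two \emph{non-negative} vectors. The first summand is independent of $b$ outright. The second has $b$-independent sign class because a $P$-offer is, by \emph{Aggregation}, a sum of offers at pure $ij$-indices, and each such offer returns a strictly positive multiple of $e_{j}$ for \emph{every} $b\in S_{+}$ --- this follows from \emph{Non-dissipation} alone, since $ce_{j}-e_{i}$ can never be $0$ and so must have a positive component, forcing $c>0$. Finally, for non-negative $x,y$ the class $\left\langle x+y\right\rangle$ is determined by $\left\langle x\right\rangle$ and $\left\langle y\right\rangle$, which yields the uniformity in $b$ (and, after the $\lambda$-scaling, statement (3)). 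Without this decomposition, or an equivalent device, your argument does not close.
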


\begin{proof}
It is evident that (3) implies (2), and (2) implies (1). We now show that (1)
implies (3). Suppose $v,a,b,w$ satisfy (1). Given $u\in\left\langle
v\right\rangle $ and $b_{\ast}\in S_{+}$, we need to find $a_{\ast}\in
S_{P}(u)$ such that $u,a_{\ast},b_{\ast},w$ satisfy (3). Since $u$ and $v$
have the same signs there exist positive scalars $\lambda_{i}$ such that
$u_{i}=\lambda_{i}v_{i}$ for all $i.$ Define $a_{\ast}$ by $\left(  a_{\ast
}\right)  _{i}=\lambda_{i}a_{i}$, where (recall) $a_{i}$ is the vector
obtained from $a$ by restricting to the $K_{i}$-components. Now we have%
\begin{align*}
v+\nu\left(  a,b\right)   &  =\left(  v-\overline{a}\right)  +r\left(
a,b\right) \\
u+\nu\left(  a_{\ast},b_{\ast}\right)   &  =\left(  u-\overline{a_{\ast}%
}\right)  +r\left(  a_{\ast},b_{\ast}\right)
\end{align*}
By construction of $a_{\ast}$ we have $\left(  v-\overline{a}\right)
_{i}=\lambda_{i}\left(  u-\overline{a_{\ast}}\right)  _{i}$ for all $i$, and
hence $\left\langle v-\overline{a}\right\rangle =\left\langle u-\overline
{a_{\ast}}\right\rangle $. Also since $a$ and $a_{\ast}$ are $P$-offers, by
\emph{Aggregation }and \emph{Invariance} we have $\left\langle r\left(
a,b\right)  \right\rangle =\left\langle r\left(  a,b_{\ast}\right)
\right\rangle =\left\langle r\left(  a_{\ast},b_{\ast}\right)  \right\rangle
$. We note that if $x,y$ are non-negative vectors then $\left\langle
x+y\right\rangle $ is uniquely determined by $\left\langle x\right\rangle $
and $\left\langle y\right\rangle $, thus we get%
\[
\left\langle u+\nu\left(  a_{\ast},b_{\ast}\right)  \right\rangle
=\left\langle v+\nu\left(  a,b\right)  \right\rangle =\left\langle
w\right\rangle
\]
which establishes (3).
\end{proof}

We note that Lemma \ref{lem:vw} (3) only depends on $\left\langle
v\right\rangle $ and $\left\langle w\right\rangle $ and we will write
$\left\langle v\right\rangle \rightarrow\left\langle w\right\rangle $ if it holds.

\begin{lemma}
For any $\left(  a,b\right)  \in S\times S_{+}$ there is $a_{\ast}\in
S_{P}\left(  \overline{a}\right)  $ such that%
\begin{equation}
\left\langle r\left(  a,b\right)  \right\rangle =\left\langle \overline{a}%
+\nu\left(  a_{\ast},b\right)  \right\rangle . \label{a*}%
\end{equation}

\end{lemma}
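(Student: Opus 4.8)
The plan is to decompose the return of the general offer $a$ into the returns of single-index offers, and then to rebuild the \emph{same sign pattern} using only the pure $P$-indices. First I would establish the linear decomposition $r(a,b)=\sum_{h\in K}a_h\,r(e_h,b)$: writing $a=\sum_h a_he_h$, this follows by combining additivity of $\nu$ in its first variable --- the identity $\nu(a,b)+\nu(c,b)=\nu(a+c,b)$ used in the no-arbitrage argument following Condition~\ref{Condition}, itself a consequence of \emph{Aggregation} --- with the homogeneity of Proposition~\ref{Linearity}. Each $r(e_h,b)$ is a genuine return, hence lies in $C=\mathbb{R}_+^m$, so the support of $r(a,b)$ is exactly $T:=\bigcup_{h:\,a_h>0}\operatorname{supp}r(e_h,b)$; and since $\overline a+\nu(a_*,b)=(\overline a-\overline{a_*})+r(a_*,b)$ is nonnegative as soon as $a_*\in S_P(\overline a)$, the whole problem reduces to producing an $a_*\in S_P(\overline a)$ whose associated vector has support exactly $T$.

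Next I would run a dichotomy on each index $h$ with $a_h>0$, say $h\in K_i$. If $\operatorname{supp}r(e_h,b)\subseteq\{i\}$, then no-arbitrage forces $\nu(e_h,b)=0$, i.e.\ $r(e_h,b)=e_i$ (the ``inert'' case). Otherwise $r_j(e_h,b)>0$ for some $j\neq i$, so $h$ is an $ij$-index; by \emph{Flexibility} there is a pure $ij$-index, and we take the representative $p_{ij}\in P$ to be pure, so that $r(e_{p_{ij}},b)=c_{ij}(b)\,e_j$ with $c_{ij}(b)>0$ --- positivity for this particular $b$ coming from the definition of $ij$-index together with the fact (established inside the proof of Lemma~\ref{lem:vw}) that the sign class of a $P$-offer's return is independent of the market state. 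I then build $a_*$ by routing each offered amount $a_he_h$: in the inert case contribute nothing, leaving $a_h$ as unspent endowment of $i$; otherwise spread the amount $a_h$ --- or $a_h-\varepsilon_h$, shaving off a tiny $\varepsilon_h>0$, whenever $i\in T$ --- equally over $\{p_{ij}:j\in\operatorname{supp}r(e_h,b),\ j\neq i\}$. Since from each $h$ at most $a_h$ units of commodity $i$ are routed, $\overline{a_*}\le\overline a$, so $a_*\in S_P(\overline a)$.

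Then I would check $\operatorname{supp}\bigl(\overline a+\nu(a_*,b)\bigr)=T$ coordinate by coordinate, using $\overline a+\nu(a_*,b)=(\overline a-\overline{a_*})+r(a_*,b)$ and $r(a_*,b)=\sum t_{ij}c_{ij}(b)e_j$ with $t_{ij}$ the total amount routed through $p_{ij}$: a commodity $k\notin T$ has every offer into it either inert or routed to indices whose returns avoid $k$, so $\overline{a_*}_k=\overline a_k$ (no shaving there, since $k\notin T$) and $r_k(a_*,b)=0$ (routing only ever targets members of $T$), giving coordinate $k$ equal to $0$; a commodity $k\in T$ with $\overline a_k>0$ survives because the $\varepsilon_h$-shaving (or the inert case) leaves $\overline{a_*}_k<\overline a_k$; and a commodity $k\in T$ with $\overline a_k=0$ lies in $\operatorname{supp}r(e_h,b)\setminus\{i\}$ for some offered $h\in K_i$, so $t_{ik}>0$ and $r_k(a_*,b)>0$. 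Both $\overline a+\nu(a_*,b)$ and $r(a,b)$ are thus nonnegative with support $T$, hence lie in the same sign class, which is exactly (\ref{a*}).

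I expect the bookkeeping in the last two steps to be the real obstacle. One must \emph{exactly} annihilate the coordinates outside $T$ --- which forces spending the entire endowment there and sending no return there --- while keeping \emph{every} coordinate of $T$ strictly positive, and this is delicate for two reasons: an impure offer $e_h$ can create several commodities at once, so it cannot be routed through a single pure index but must be split across several; and there is in general no pure $kk$-index, so a commodity can never be ``kept'' by re-offering it, only by leaving endowment unspent --- which is precisely why the inert case and the $\varepsilon_h$-shaving are forced into the construction rather than being cosmetic choices.
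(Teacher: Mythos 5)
Your proposal is correct and follows essentially the same route as the paper: decompose the offer via \emph{Aggregation}/linearity, use \emph{Flexibility} to replace each piece by an offer on pure $ij$-indices in $P$ reproducing the off-diagonal signs, and then tune how much of the endowment is actually spent so that the unspent part $\overline{a}-\overline{a_{\ast}}$ carries the right sign in the offered commodities. The only difference is presentational: the paper first reduces to a single $K_{i}$-offer and handles the spent amount by one global rescaling of $a_{\ast}$ (scale up to spend all, or down to leave a remainder), whereas you work with the general offer directly and achieve the same effect by per-index routing with the $\varepsilon_{h}$-shaving.
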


\begin{proof}
By \emph{Aggregation}, it suffices to prove this when $a$ is a $K_{i}$-offer
for some $i$. By \emph{Flexibility} there is some $a_{\ast}\in S_{P}\left(
\overline{a}\right)  $ such that $r_{i}\left(  a_{\ast},b\right)  =0$, while
$r_{j}\left(  a_{\ast},b\right)  $ has the same sign as $r_{j}\left(
a,b\right)  $ for all $j\neq i$. We write
\[
\overline{a}+\nu\left(  a_{\ast},b\right)  =\left(  \overline{a}%
-\overline{a_{\ast}}\right)  +r\left(  a_{\ast},b\right)
\]
and note that since $a_{\ast}$ is a pure $K_{i}$-offer, the sign of $r\left(
a_{\ast},b\right)  $ does not change if we rescale $a_{\ast}$. If
$r_{i}\left(  a,b\right)  =0$ we scale up $a_{\ast}$ to ensure $\overline
{a_{\ast}}=\overline{a}$, while if $r_{j}\left(  a,b\right)  >0$ then we scale
down $a_{\ast}$ to ensure $\overline{a_{\ast}}\lvertneqq\overline{a}$; in each
case the rescaled $a_{\ast}$ satisfies (\ref{a*}).
\end{proof}

\begin{lemma}
\label{lm:pure}$v^{1}\rightarrow\cdots\rightarrow v^{t}$ implies $\left\langle
v^{1}\right\rangle \rightarrow\cdots\rightarrow\left\langle v^{t}\right\rangle
$.
\end{lemma}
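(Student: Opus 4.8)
The statement is a chained version of the single-step implication $v \rightarrow w \Rightarrow \langle v\rangle \rightarrow \langle w\rangle$, so the plan is to reduce to that single step and then iterate. First I would unwind the definitions. By definition $v^s \rightarrow v^{s+1}$ means there exist $a^{(s)}, b^{(s)}$ with $v^{s+1} = v^s + \nu(a^{(s)}, b^{(s)})$ and $\overline{a^{(s)}} \leq v^s$. I want to promote each such transition to the ``$P$-subordinate'' version, i.e. to show $\langle v^s\rangle \rightarrow \langle v^{s+1}\rangle$ in the sense of Lemma \ref{lem:vw}(3): for every $u \in \langle v^s\rangle$ there is $a_* \in S_P(u)$ with $u + \nu(a_*, b) \in \langle v^{s+1}\rangle$ for all $b \in S_+$.

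The single-step core is exactly the content of the preceding (unnumbered) lemma combined with Lemma \ref{lem:vw}. Concretely: given the transition $v^{s+1} = v^s + \nu(a^{(s)}, b^{(s)})$ with $\overline{a^{(s)}} \leq v^s$, apply the preceding lemma to the pair $(a^{(s)}, b^{(s)})$ to obtain $a_* \in S_P(\overline{a^{(s)}})$ with $\langle r(a^{(s)}, b^{(s)})\rangle = \langle \overline{a^{(s)}} + \nu(a_*, b^{(s)})\rangle$. Then write
\[
v^{s+1} = v^s + \nu(a^{(s)}, b^{(s)}) = (v^s - \overline{a^{(s)}}) + r(a^{(s)}, b^{(s)}),
\]
and since $v^s - \overline{a^{(s)}} \geq 0$ and $r(a^{(s)}, b^{(s)}) \geq 0$, the sign class $\langle v^{s+1}\rangle$ is determined by $\langle v^s - \overline{a^{(s)}}\rangle$ and $\langle r(a^{(s)}, b^{(s)})\rangle$; replacing $r(a^{(s)}, b^{(s)})$ by $\overline{a^{(s)}} + \nu(a_*, b^{(s)})$ (same sign class, same nonnegativity since $a_* \in S_P(\overline{a^{(s)}})$ gives $\overline{a_*} \leq \overline{a^{(s)}} \leq v^s$) yields $v^s + \nu(a_*, b^{(s)}) \in \langle v^{s+1}\rangle$ with $a_* \in S_P(v^s)$. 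This is precisely condition (1) of Lemma \ref{lem:vw} for the pair $(v^s, v^{s+1})$, so by that lemma condition (3) also holds, i.e. $\langle v^s\rangle \rightarrow \langle v^{s+1}\rangle$.

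Finally I would chain these: the relation $\langle v\rangle \rightarrow \langle w\rangle$ is transitive, since given $a \in S_P(u)$ producing $u + \nu(a,b) \in \langle w\rangle$ and then (by condition (3) applied to the element $u + \nu(a,b) \in \langle w\rangle$) an $a' \in S_P$ pushing it into $\langle z\rangle$, one composes the offers via \emph{Aggregation} — this is essentially the same bookkeeping already used in Lemma \ref{lem:vw}, and gives $\langle v\rangle \rightarrow \langle z\rangle$. Applying transitivity along $v^1, v^2, \ldots, v^t$ gives $\langle v^1\rangle \rightarrow \cdots \rightarrow \langle v^t\rangle$. The only mild subtlety — and the step I would be most careful about — is the sign-composition bookkeeping: verifying that $\langle x + y\rangle$ is determined by $\langle x\rangle$ and $\langle y\rangle$ when $x, y \geq 0$ (trivially true, no cancellation) and that after substituting $a_*$ for $a^{(s)}$ the new offer is genuinely subordinate to $v^s$, so that the transition is a legal one. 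Everything else is routine unwinding of Lemma \ref{lem:vw} and the preceding lemma.
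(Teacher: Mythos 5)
Your proposal is correct and follows essentially the same route as the paper: reduce to the single-step implication $v\rightarrow w\Rightarrow\left\langle v\right\rangle \rightarrow\left\langle w\right\rangle$, invoke the preceding lemma to replace $a$ by a subordinate $P$-offer $a_{\ast}$, compare the two decompositions of $v+\nu\left(  \cdot,b\right)  $ as sums of non-negative vectors, and finish with Lemma \ref{lem:vw}. Your closing discussion of transitivity is unnecessary (the conclusion of the lemma is itself a chain of single-step relations, so no composition of offers is required), but it is harmless.
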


\begin{proof}
It suffices to show that $v\rightarrow w$ implies $\left\langle v\right\rangle
\rightarrow\left\langle w\right\rangle $. Now by definition
\[
w=v+\nu\left(  a,b\right)  \text{ for some }\left(  a,b\right)  \in S\times
S_{+}\text{ with }\overline{a}\leq v.
\]
If $a_{\ast}$ is as in (\ref{a*}) then the identities
\begin{align*}
v+\nu\left(  a_{\ast},b\right)   &  =\left(  v-\overline{a}\right)  +\left(
\overline{a}+\nu\left(  a_{\ast},b\right)  \right) \\
v+\nu\left(  a,b\right)   &  =\left(  v-\overline{a}\right)  +r\left(
a,b\right)
\end{align*}
imply $\left\langle v+\nu\left(  a_{\ast},b\right)  \right\rangle
=\,\left\langle w\right\rangle $, whence $\left\langle v\right\rangle
\rightarrow\left\langle w\right\rangle $ by Lemma \ref{lem:vw} (1).
\end{proof}

\begin{proposition}
\label{Convertibility} For $b\in S_{+}$ and any $i\neq j$ there is $a\in
S_{P}$ such that $\nu(a,b)$ is an $\bar{\imath}j$-vector.
\end{proposition}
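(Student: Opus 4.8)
The goal is to show that for any fixed market state $b\in S_{+}$ and any ordered pair $i\neq j$, some pure-offer $a\in S_{P}$ produces a net trade $\nu(a,b)$ that is an $\bar{\imath}j$-vector, i.e.\ a vector of the form $-se_{i}+te_{j}$ with $s,t>0$. The plan is to run a path of conversions from $e_{i}$ to $e_{j}$, use the sign-tracking machinery of Lemmas \ref{lem:vw}--\ref{lm:pure} to replace it by a path through pure offers, and then collapse that path into a single pure offer by the linearity/rescaling properties of $\nu$ from Proposition \ref{Linearity} together with \emph{Aggregation}.

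First I would invoke connectedness of $M$ ($\tau(M)<\infty$, which holds since $M\in\mathfrak{M}$): by the definition of $\tau_{ij}$ there is a finite conversion sequence $e_{i}=v^{0}\rightarrow v^{1}\rightarrow\cdots\rightarrow v^{t}=e_{j}$. By Lemma \ref{lm:pure} this yields $\langle e_{i}\rangle\rightarrow\cdots\rightarrow\langle e_{j}\rangle$, and unravelling the definition of $\rightarrow$ for signs, at each step there is a pure-offer $a^{k}\in S_{P}(u)$ (for every representative $u$ of the current sign class) with $u+\nu(a^{k},b)$ in the next sign class, for the \emph{given} $b$. Composing these and using that one may start from any positive scalar multiple of $e_{i}$ (Lemma \ref{lem:vw}(3)), I get: for a suitably large scalar $s>0$, there is a single pure offer $a\in S_{P}(se_{i})$ with $\overline{a}\leq se_{i}$ and $se_{i}+\nu(a,b)\in\langle e_{j}\rangle$. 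The point of threading through all intermediate sign classes is exactly that $P$ was chosen to contain a pure index for every pair that admits one, so each elementary sign-transition along the path can indeed be realized by an offer supported on $P$; and the composition of offers supported on $P$ is again supported on $P$ by \emph{Aggregation}.

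Now I unwind what this says: since $\overline{a}$ is a multiple of $e_{i}$ (the offer uses only commodity $i$), write $\overline{a}=se'_{i}$ with $0<s'\le s$; then $\nu(a,b)=\bigl(se_{i}+\nu(a,b)\bigr)-\overline{a}$ has $i$-component equal to (something $\le 0$) minus $s'$, hence $\nu_{i}(a,b)<0$ if the $i$-component of $se_{i}+\nu(a,b)$ is $0$ — which it is, since that vector lies in $\langle e_{j}\rangle$ and $j\neq i$. Likewise its $j$-component is $>0$ and every other component is $0$. So $\nu(a,b)$ is already an $\bar{\imath}j$-vector, and rescaling $a$ by $1/s$ (using $\nu(ta,b)=t\nu(a,b)$ from Proposition \ref{Linearity}, and noting $\frac{1}{s}a$ is still a $P$-offer) gives the claimed $a\in S_{P}$ while keeping the $\bar{\imath}j$-vector form. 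That completes the argument.

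The main obstacle, and the step deserving the most care, is the passage from "a path of sign-transitions, each realizable by a $P$-offer on the appropriate representative" to "a single $P$-offer on one representative $se_{i}$ realizing the whole composition for this particular $b$." This is where Lemma \ref{lem:vw}(3) is doing the real work — it lets me choose the scalar at the \emph{start} large enough so that at every stage there is enough of the relevant commodities present ($\overline{a^{k}}\le u$) to make the next pure offer, and it pins down the resulting sign class independently of $b$ so that the composed offer's net trade has the sign of $e_{j}$ for the given $b$. I should also check the edge case $t=0$ (the edge $ij$ itself admits a conversion), where the path is trivial and a single pure $ij$-index does the job directly, and confirm that "$\overline{a}$ uses only commodity $i$" survives composition — which it does because each $a^{k}$ along the path from the $e_{i}$-side can be taken to be a $K_{i}$-offer only at the first step, so more carefully I should phrase the composition so that the \emph{aggregate} $\overline{a}$ is controlled, using that conversions only ever draw on commodities currently present in $v^{k}$, and $v^{0}=se_{i}$.
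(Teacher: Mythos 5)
Your overall strategy is the paper's: take a conversion chain from an $i$-vector to a $j$-vector, pass to sign classes via Lemma \ref{lm:pure}, realize each transition at the fixed $b$ by a $P$-offer using Lemma \ref{lem:vw}(3), and add up the offers. But the final step as you have written it contains a genuine error. The composed offer $a=\sum_k a^k$ is \emph{not} subordinate to $se_i$: each $a^k$ is subordinate to the current state $u^k$, and for $k\geq 1$ that state is supported on the intermediate commodities of the path, so $\overline{a}$ has strictly positive components in those commodities and is not a multiple of $e_i$. Your closing remark that each $a^k$ ``can be taken to be a $K_i$-offer only at the first step'' does not repair this --- the later offers necessarily offer commodities other than $i$. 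Consequently the identity $\nu(a,b)=\left(se_i+\nu(a,b)\right)-\overline{a}$ and the component count built on $\overline{a}=s'e_i$ are both wrong as stated.

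The good news is that no control over $\overline{a}$ is needed: the proposition only asks for $a\in S_P$, and $S_P$ is closed under addition, so $a=\sum_k a^k\in S_P$ automatically. The paper concludes by telescoping: setting $u^{k+1}=u^k+\nu(a^k,b)$ and using additivity of $\nu$ in its first argument (from \emph{Aggregation} via Proposition \ref{Linearity}), one gets $\nu(a,b)=\sum_k\nu(a^k,b)=u^t-u^0$, which is a $j$-vector minus an $i$-vector and hence an $\bar{\imath}j$-vector outright. Replacing your ``unwinding'' paragraph by this one line makes the argument correct; the final rescaling by $1/s$ is then also unnecessary, since any $s,t>0$ are allowed in the definition of an $\bar{\imath}j$-vector.
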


\begin{proof}
Let $v$ be an $i$-vector and let $t=\tau_{ij}\left(  M\right)  $ then by
definition we have a sequence%
\[
v\rightarrow v^{1}\rightarrow\cdots\rightarrow v^{t-1}=w
\]
where $w$ is a $j$-vector. By the previous lemma we get%
\[
\left\langle v\right\rangle \rightarrow\left\langle v^{1}\right\rangle
\rightarrow\cdots\rightarrow\left\langle v^{t-1}\right\rangle \rightarrow
\left\langle w\right\rangle
\]
By Lemma \ref{lem:vw} (3) this means we can find sequences%
\[
u^{i}\in\left\langle v^{i}\right\rangle ,a^{i}\in S_{P}\left(  u^{i}\right)
\text{ for }i=0,\ldots,t-1
\]
such that $u^{i}+\nu\left(  a^{i},b\right)  =u^{i+1}$. If $a=\sum a^{i}$ then
we have $a\in S_{P}$ and%
\[
\nu\left(  a,b\right)  =%
{\textstyle\sum}
\nu\left(  a^{i},b\right)  =u^{t}-u^{1}%
\]
which is an $\bar{\imath}j$-vector.
\end{proof}

It will be convenient to write an $\bar{\imath}j$-vector in the form $\left(
-x,y\right)  $ after suppressing the other components. In the context of the
above proposition if $\nu\left(  a,b\right)  =\left(  -x,y\right)  $ then by
linearity $\nu\left(  a/x,b\right)  =\left(  -1,y/x\right)  $, and we will say
that the offer $a$ (or $a/x$) achieves an $ij$-exchange ratio of $y/x$ at $b$.

\begin{lemma}
If $a^{\prime},a^{\prime\prime}$achieve $ij$-exchange ratios $\alpha^{\prime
},\alpha^{\prime\prime}$ at $b$, then $\alpha^{\prime}=\alpha^{\prime\prime}$.
\end{lemma}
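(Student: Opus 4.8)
The plan is to deduce the equality of exchange ratios directly from value conservation, which is now available because of Theorem~\ref{Emergence of Prices}. Suppose $a'$ achieves an $ij$-exchange ratio $\alpha'$ at $b$ and $a''$ achieves $\alpha''$ at $b$. By the convention introduced just above, this means $\nu(a',b)$ is an $\bar{\imath}j$-vector of the form $(-1,\alpha')$ (after rescaling by linearity, Proposition~\ref{Linearity}) and similarly $\nu(a'',b)=(-1,\alpha'')$, with all other components zero. First I would invoke the Corollary to Theorem~\ref{Emergence of Prices}: since $\nu(a',b)$ is an $\bar{\imath}j$-vector, $\nu_i(a',b)/\nu_j(a',b)=-p_{ij}(b)$, i.e. $-1/\alpha'=-p_{ij}(b)$, so $\alpha'=1/p_{ij}(b)$. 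The identical computation applied to $a''$ gives $\alpha''=1/p_{ij}(b)$. Hence $\alpha'=\alpha''$, since both equal the single number $1/p_{ij}(b)$, which depends only on $b$ and on the pair $(i,j)$, not on the offer producing the trade.

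Alternatively, and without even naming $p_{ij}(b)$, one can argue purely from value conservation plus no-arbitrage: set $c=a'+a''$ (or, to stay inside the domain, work with suitable positive rescalings). Then by \emph{Aggregation}, $\nu(a'+a'',b)=\nu(a',b)+\nu(a'',b)=(-2,\alpha'+\alpha'')$, again an $\bar{\imath}j$-vector. Value conservation $p(b)\cdot\nu(a',b)=0$ forces $-p_i(b)+\alpha' p_j(b)=0$, and likewise $-p_i(b)+\alpha'' p_j(b)=0$; subtracting and using $p_j(b)>0$ gives $\alpha'=\alpha''$ immediately. I would present the first version as the cleaner one, since the Corollary has already packaged exactly this computation.

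I do not anticipate a genuine obstacle here; the lemma is essentially a restatement of the Corollary once the terminology ``achieves an $ij$-exchange ratio'' is unwound. The only point requiring a word of care is the passage from an arbitrary $\bar{\imath}j$-vector $\nu(a,b)=(-x,y)$ with $x>0$ to the normalized form $(-1,y/x)$: this uses the positive homogeneity of $\nu$ in its first argument from Proposition~\ref{Linearity}, together with the fact that scaling $a$ by $1/x$ keeps it in $S_P\subset S$ and does not change $b$, so value conservation still applies at the same price ray $p(b)$. Once that is noted, the equality $\alpha'=\alpha''=p_i(b)/p_j(b)=1/p_{ij}(b)$ is immediate, and in particular the $ij$-exchange ratio at $b$ is a well-defined function of $b$ alone, which is what the subsequent discussion of price complexity needs.
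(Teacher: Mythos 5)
Your argument is circular. This lemma sits inside the section devoted to \emph{proving} Theorem \ref{Emergence of Prices}: the theorem's proof begins by \emph{defining} $p_j^{-1}$ to be the $1j$-exchange ratio at $b$, ``as in the previous lemma,'' so the very existence of the price ray $p(b)$ --- and hence of the function $p_{ij}(b)$ and the value-conservation identity $p(b)\cdot\nu(a,b)=0$ --- rests on first knowing that the $ij$-exchange ratio at $b$ is independent of the offer achieving it. Both versions of your argument (the one via the Corollary, and the one via $p(b)\cdot\nu(a',b)=0$) invoke exactly the conclusion of Theorem \ref{Emergence of Prices}, which is not yet available. The lemma is not ``essentially a restatement of the Corollary''; it is the ingredient that makes the Corollary's $p_{ij}(b)$ well defined in the first place.

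The paper's actual proof is elementary and price-free. By Proposition \ref{Convertibility} (applied with the roles of $i$ and $j$ swapped) there is an offer $a$ with $\nu(a,b)$ a $\bar{\jmath}i$-vector; after rescaling, $\nu(a,b)=(1,-\alpha)$, $\nu(a',b)=(-1,\alpha')$, $\nu(a'',b)=(-1,\alpha'')$ in the $(i,j)$-coordinates. Linearity (Proposition \ref{Linearity}, together with \emph{Aggregation}) gives $\nu(a+a',b)=(0,\alpha-\alpha')$, and the no-arbitrage consequence of \emph{Non-dissipation} and commodity conservation (neither $\nu\gvertneqq 0$ nor $\nu\lvertneqq 0$) forces $\alpha=\alpha'$; the same argument gives $\alpha=\alpha''$. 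If you want to salvage your write-up, you must replace every appeal to $p(b)$ with this kind of direct cancellation argument.
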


\begin{proof}
By the previous proposition there exists an $a$ such that $\nu\left(
a,b\right)  $ is a $\bar{j}i$-vector; if $\alpha$ is the corresponding
exchange ratio then by rescaling $a,a^{\prime},a^{\prime\prime}$ we may assume
that
\[
\nu\left(  a,b\right)  =\left(  1,-\alpha\right)  ,\nu\left(  a^{\prime
},b\right)  =\left(  -1,\alpha^{\prime}\right)  ,\nu\left(  a^{\prime\prime
},b\right)  =\left(  -1,\alpha^{\prime\prime}\right)  .
\]
By Proposition \ref{Linearity} we get
\[
\nu\left(  a+a^{\prime},b\right)  =\left(  0,\alpha-\alpha^{\prime}\right)
\]
Now by \emph{Non-dissipation} we get $\alpha\geq\alpha^{\prime}$, and
exchanging the roles of $i$ and $j$ we conclude that $\alpha^{\prime}%
\geq\alpha$ and hence that $\alpha=\alpha^{\prime}$. Arguing similarly we get
$\alpha=\alpha^{\prime\prime}$ and hence that $\alpha^{\prime}=\alpha
^{\prime\prime}$.
\end{proof}

\begin{proof}
[Proof of Theorem \ref{Emergence of Prices}]Fix $b\in S_{+}$ and consider the
vector%
\[
p=\left(  1,p_{2},\ldots,p_{m}\right)
\]
where $p_{j}^{-1}$ is the $1j$-exchange ratio at $b$, as in the previous
lemma. We will show that $p$ satisfies the conditions of Theorem
\ref{Emergence of Prices}, \textit{i.e.} that
\begin{equation}
p\cdot\nu\left(  a,b\right)  =0\text{ for all }a. \label{=p.nu}%
\end{equation}
We argue by induction on the number $d\left(  a,b\right)  $ of non-zero
components of $\nu\left(  a,b\right)  $ in positions $2,\ldots,m$. If
$d\left(  a,b\right)  =0$ then $\nu\left(  a,b\right)  =0$ by
\emph{Non-dissipation} and (\ref{=p.nu}) is obvious. If $d\left(  a,b\right)
=1$ then $\nu\left(  a,b\right)  $ is either an $\bar{1}j$-vector or a
$\bar{j}1$ vector, which by the definition of $p_{j}$ and the previous lemma
is necessarily of the form%
\[
\left(  -x,xp_{j}^{-1}\right)  \text{ or }\left(  x,-xp_{j}^{-1}\right)  ;
\]
for such vectors (\ref{=p.nu}) is immediate. Now suppose $d\left(  a,b\right)
=d>1$ and fix $j$ such that $\nu_{j}\left(  a,b\right)  \neq0$. Then we can
choose $a^{\prime}$ such that $\nu\left(  a^{\prime},b\right)  $ is a $\bar
{1}j$ or a $\bar{j}1$- vector such that $\nu_{j}\left(  a,b\right)  =-\nu
_{j}\left(  a^{\prime},b\right)  .$ It follows that $d\left(  a+a^{\prime
},b\right)  <d$ and by linearity we get
\[
p\cdot\nu\left(  a,b\right)  =p\cdot\nu\left(  a+a^{\prime},b\right)
-p\cdot\nu\left(  a^{\prime},b\right)  .
\]
By the inductive hypothesis the right side is zero, hence so is the left side.

Finally the uniqueness of the price function is obvious, because the return
function of the mechanism dictates how many units of $j$ may be obtained for
one unit of $i$, yielding just one possible candidate for the exchange rate
for every pair $ij.$
\end{proof}

\section{Proof of Theorem \ref{Emergence of G-mechanisms}}

We say a matrix $X$ is an $S\times T$ matrix if its rows and columns are
indexed by finite sets $S$ and $T$ respectively; if $Y$ is a $T\times U$
matrix then the product $XY$ is a well-defined $S\times U$ matrix. For the set
$\left[  n\right]  =\left\{  1,\ldots,n\right\}  $ we will speak of $n\times
T$ matrices instead of $\left[  n\right]  \times T$ matrices, etc.

Let $M\in\mathfrak{M(}m\mathfrak{)}$ and write $K_{i}=K_{i}\left(  M\right)  $
and $K=\coprod_{i}K_{i}$ as usual. For any vector $v\in\mathbb{R}_{++}^{m}$,
let $D_{v}$ denote the $m\times m$ diagonal matrix \emph{diag}$\left\{
v_{1},\ldots,v_{m}\right\}  ,$ and let $E_{v}$ denote the\ $K\times K$
\textquotedblleft extended\textquotedblright\ diagonal matrix whose $K_{i}%
$-diagonal entries are all $v_{i}$. Also let $A$ be the $m\times K$
\textquotedblleft auxiliary\textquotedblright\ matrix whose $K_{1}$-columns
are $(1,0,\ldots,0)^{t}$, $K_{2}$-columns are $(0,1,0,\ldots,0)^{t}$, etc.

\begin{lemma}
\label{lm:Nb}$M$ is uniquely determined by a map $b\mapsto N_{b}$ from $S_{+}$
to the space of non-negative $m\times K$ column-stochastic matrices as follows.

\begin{enumerate}
\item The price ray $p=p(b)$ is obtained as the unique solution of
\begin{equation}
C_{b}p=\Delta_{b}p \label{=Delp}%
\end{equation}

where $\Delta_{b}=$ $AD_{b}A^{t}$ is the diagonal matrix of column sums of
$C_{b}=N_{b}D_{b}A^{t}$

\item The return function is given by
\begin{equation}
r\left(  a,b\right)  =M_{b}a\text{ where }M_{b}=D_{p}^{-1}N_{b}E_{p}
\label{=Nb}%
\end{equation}

\end{enumerate}
\end{lemma}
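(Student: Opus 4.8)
The plan is to unpack the definitions of the three complexity notions and express the return function $r(a,b)$ in linear-algebraic form, thereby reducing the claim to a statement about a single map $b \mapsto N_b$. First I would recall that, by Proposition \ref{Linearity} together with \emph{Anonymity} and \emph{Aggregation}, the mechanism $M$ is determined by the net trade function $\nu(a,b)$, which is positively homogeneous of degree $1$ in $a$ and degree $0$ in $b$; hence $r(a,b) = \nu(a,b) + \overline{a}$ is linear in $a$ for each fixed $b$, so there is an $m \times K$ matrix $M_b$ with $r(a,b) = M_b a$. The columns of $M_b$ are the returns $r(e_h, b)$ for $h \in K$, and since each such return is a nonnegative commodity bundle, $M_b$ is a nonnegative matrix. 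I would then invoke Theorem \ref{Emergence of Prices}: value conservation $p(b) \cdot \nu(a,b) = 0$ for all $a$ means $p(b)^t M_b a = p(b)^t \overline{a} = p(b)^t A a$ for all $a \in S$, i.e. $p(b)^t M_b = p(b)^t A$.

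Next I would extract the claimed factorization $M_b = D_p^{-1} N_b E_p$. Define $N_b := D_p M_b E_p^{-1}$, where $p = p(b)$; this is nonnegative since $M_b$ is and $D_p, E_p$ are positive diagonal matrices, and it manifestly does not depend on the choice of scaling representative of the ray $p$, because replacing $p$ by $cp$ multiplies $D_p$ by $c$ on the left and $E_p^{-1}$ by $c^{-1}$ on the right. The column-stochasticity of $N_b$ is exactly a restatement of value conservation: the column of $N_b$ indexed by $h \in K_i$ is $p_i^{-1} D_p \cdot (\text{column } h \text{ of } M_b)$, wait — more carefully, $(N_b)_{jh} = p_j (M_b)_{jh} / p_i$ for $h \in K_i$, so the $h$-th column sum is $p_i^{-1} \sum_j p_j (M_b)_{jh} = p_i^{-1} (p^t M_b)_h = p_i^{-1}(p^t A)_h = p_i^{-1} p_i = 1$, using $p^t M_b = p^t A$ and the fact that column $h$ of $A$ is $e_i$. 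So $N_b$ is column-stochastic. Conversely, from any such $N_b$ one recovers $M_b = D_p^{-1} N_b E_p$ and hence $r(a,b)$, giving claim (2), provided one also knows that $p = p(b)$ is itself recoverable from $N_b$, which is claim (1).

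For claim (1), I would verify that $p(b)$ solves the eigen-type equation $C_b p = \Delta_b p$ with $C_b = N_b D_b A^t$ and $\Delta_b = A D_b A^t$. Expanding: $\Delta_b = A D_b A^t$ is the $m \times m$ diagonal matrix whose $j$-th entry is $\sum_{h \in K_j} b_h = \overline{b}_j$, so $(\Delta_b p)_j = \overline{b}_j\, p_j$ — the total value of commodity $j$ on offer in state $b$. On the other side, $C_b p = N_b D_b A^t p$; here $A^t p$ is the $K$-vector with $h$-entry $p_i$ for $h \in K_i$, so $D_b A^t p$ has $h$-entry $b_h p_i$, i.e. the value of the offer labelled $h$. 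Then $N_b (D_b A^t p)$ has $j$-th entry $\sum_h (N_b)_{jh} b_h p_i$, which by the factorization $N_b = D_p M_b E_p^{-1}$ equals $p_j \sum_h (M_b)_{jh} b_h = p_j\, r_j(b, b)$. But $r(b,b) = \overline{b} + \nu(b,b) = \overline{b}$ by conservation of commodities (Proposition \ref{Linearity} gives $\nu(b,b)=0$, or directly $\nu(b,b) = r(b,b) - \overline{b}$ and $\rho(b, b-b)=\rho(b,0)$ returns the offer), so the $j$-th entry is $p_j \overline{b}_j = (\Delta_b p)_j$, confirming $C_b p = \Delta_b p$. That $C_b$'s column sums form the diagonal of $\Delta_b$ is immediate: the $h$-th column sum of $C_b = N_b D_b A^t$ is $b_h$ times the $h$-row... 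I would compute it as $(\mathbf{1}^t N_b) D_b A^t = \mathbf{1}^t D_b A^t$ (column-stochasticity) $= (\overline{b})^t A^t$... rather, $\mathbf{1}^t C_b = \mathbf{1}^t N_b D_b A^t = \mathbf{1}^t D_b A^t$, and since $(D_b A^t)$ summed appropriately gives the diagonal of $A D_b A^t = \Delta_b$, the column sums of $C_b$ match. Uniqueness of $p$ up to scaling is the content of the Lemma in Section \ref{G-mech} (Perron–Frobenius on the associated irreducible nonnegative structure, using connectedness of $M$), so $p(b)$ is genuinely determined by $N_b$.

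The main obstacle I anticipate is not any single computation but the bookkeeping of the three different index sets ($[m]$ for commodities, $K$ for indices, the block structure $K = \coprod K_i$) and keeping the matrices $A, D_v, E_v, N_b, C_b, \Delta_b$ straight — in particular checking that the abstract extension of $\nu$ to all of $S \times S_+$ from Proposition \ref{Linearity} legitimately yields a \emph{globally} linear $M_b$ (not merely on the cone $a \le b$), and that $r(b,b) = \overline{b}$ holds in this extended sense. Once the dictionary between $r$ and $N_b$ is nailed down, both claims (1) and (2) are direct substitutions, and the "uniquely determined" assertion amounts to observing that the passage $M \rightsquigarrow (b \mapsto N_b)$ and the passage $(b \mapsto N_b) \rightsquigarrow M$ via (\ref{=Delp}) and (\ref{=Nb}) are mutually inverse.
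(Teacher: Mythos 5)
Your proposal is correct and follows essentially the same route as the paper: write $r(a,b)=M_b a$ via Proposition \ref{Linearity}, set $N_b=D_pM_bE_p^{-1}$, get column-stochasticity from value conservation (Theorem \ref{Emergence of Prices}), and derive $C_bp=\Delta_bp$ from commodity conservation $M_bb=Ab$. The only cosmetic difference is that the paper obtains column-stochasticity by using the Invariance identity $M_{E_vb}=D_vM_bE_v^{-1}$ to identify $N_b$ with $M_{E_pb}$, where all prices equal $1$, whereas you verify the column sums directly from $p^tM_b=p^tA$; both are valid.
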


\begin{proof}
Let $p=p(b)$ be the price function whose existence is guaranteed by Theorem
\ref{Emergence of Prices}. We will first prove formula \ref{=Nb} for $r(a,b)$
and then prove formula \ref{=Delp}. By Proposition \ref{Linearity}, the return
function of the mechanism $M$ is of the form $r\left(  a,b\right)  =M_{b}a$,
where $b\mapsto M_{b}$ is a map from $S_{+}$ to the space of non-negative
$m\times k$ matrices satisfying
\[
M_{b}b=Ab
\]
and the identity%
\begin{equation}
M_{E_{v}b}=D_{v}M_{b}E_{v}^{-1}\text{ for all }v\in R_{++}^{m}. \label{Mb}%
\end{equation}
(The non-negativity $M_{b}$ follows from that of $r\left(  a,b\right)  $. The
first display holds by conservation of commodities and the second by
\emph{Invariance}.) Define
\[
b^{\prime}=E_{p}b\text{, }N_{b}=M_{b^{\prime}}\text{.}%
\]
By \emph{Invariance} it follows that $p(b^{\prime})=\mathbf{1}$. Also each
column of $N_{b}=M_{b^{\prime}}$ is the return to the offer of a single unit
in some commodity. Since all prices are $1$ at $b^{\prime}$, Theorem
\ref{Emergence of Prices} implies that each column of $N_{b}$ sums to $1$,
\textit{i.e. }$N_{b}$ is column stochastic. Now by (\ref{Mb}) we get
\[
N_{b}=M_{E_{p}b}=D_{p}M_{b}E_{p}^{-1}\text{,}%
\]
whence $M_{b}=D_{p}^{-1}N_{b}E_{p}$ as desired

Now combining (\ref{=Nb}) and $M_{b}b=Ab,$ with the identity $D_{p}A=AE_{p}$
we have%
\[
N_{b}E_{p}b=D_{p}M_{b}b=D_{p}Ab=AE_{p}b.
\]
Using the identity $E_{p}b=D_{b}A^{t}p$ we can rewrite this as
\[
N_{b}D_{b}A^{t}p=AD_{p}A^{t}b,
\]
which is precisely (\ref{=Delp}).
\end{proof}

\begin{lemma}
\label{lm:span} Let $N_{b}$ in Lemma \ref{lm:Nb} and let $h\in K_{i}$ be a
pure $ij$-index.

\begin{enumerate}
\item The $h$-th column of $N_{b}$ is the $j$-th unit vector $e_{j}$,
\emph{independent} of $b$.

\item Every $K_{i}$-column of $N_{b}$ is a linear combination of the
\textquotedblleft pure\textquotedblright\ $K_{i}$-columns.
\end{enumerate}
\end{lemma}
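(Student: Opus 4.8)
For part (1) I would use the characterization of pure $ij$-indices in terms of the return function together with column-stochasticity of $N_b$. By definition a pure $ij$-index $h\in K_i$ satisfies $\rho_k(e_h,a)=0$ for all $k\neq j$ and all $a$; translating through the identity $r(e_h, b) = M_b e_h$ and the relation $M_b = D_p^{-1} N_b E_p$ from Lemma \ref{lm:Nb}, and using that $E_p e_h = p_i e_h$ (since $h\in K_i$), one sees that the $h$-th column of $M_b$ is supported only on coordinate $j$; hence the $h$-th column of $N_b = D_p M_b E_p^{-1}$ is also supported only on coordinate $j$. But $N_b$ is column-stochastic, so that single nonzero entry must equal $1$, giving $e_j$. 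The claim that this is independent of $b$ is then immediate, since $e_j$ does not mention $b$. (One should double-check that ``$h$ is a pure $ij$-index'' genuinely forces the $j$-coordinate to be nonzero for \emph{some} $b$ — otherwise the column would be zero, contradicting column-stochasticity — and this is exactly the content of $h$ being an $ij$-index in the first place, i.e. $\rho_j(e_h,a)>0$ for some $a$.)

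For part (2), the goal is to show every $K_i$-column of $N_b$ lies in the span of the pure $K_i$-columns. The natural route is via \emph{Flexibility}. Given an arbitrary $h\in K_i$, the offer $e_h$ produces some return $r(e_h, b)$, which is a nonnegative vector summing (in the $N_b$-normalization) to $1$. I would invoke the lemma proved just above Lemma \ref{lm:pure} (the one establishing $\langle r(a,b)\rangle = \langle \overline{a} + \nu(a_*, b)\rangle$ for some $a_* \in S_P(\overline{a})$): applied to $a = e_h$, it yields a $P$-offer $a_*$ — a nonnegative combination of pure indices — whose net trade reproduces the \emph{sign pattern} of $r(e_h,b)$. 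The subtlety is that Lemma's statement controls only signs, not magnitudes, whereas part (2) asserts an exact linear identity of columns. So I would instead argue directly: for each $k\neq i$ with $\rho_k(e_h, b) > 0$, Flexibility gives a pure $ik$-index; part (1) says its $N_b$-column is $e_k$. I then need the coefficients. Here value conservation (Theorem \ref{Emergence of Prices}) pins things down: writing the $h$-column of $N_b$ as $\sum_k c_k e_k$ with $c_k \geq 0$, $\sum_k c_k = 1$, I claim $c_k$ is exactly the ratio determined by the $ik$-exchange ratio at $b'= E_p b$, and the pure $ik$-column scaled by $c_k$ contributes $c_k e_k$. Summing over $k$ reconstructs the $h$-column as $\sum_k c_k \cdot(\text{pure }ik\text{-column})$, which is the desired linear combination.

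**Main obstacle.** The delicate point is bridging from the sign-level statements (Lemma \ref{lem:vw}, Lemma \ref{lm:pure}, and the intermediate lemma) to an exact linear-algebraic identity among columns. Signs alone do not give coefficients. I expect the resolution to run through value conservation: once we know (from Flexibility + part (1)) \emph{which} pure columns $e_k$ can appear, the requirement $p(b')\cdot \nu(a,b') = 0$ for every $P$-offer $a$ forces the exchange ratio along each edge $ik$ to be a fixed number, and column-stochasticity of $N_b$ forces the coefficients $c_k$ in the $h$-column to sum to $1$; together these determine the $c_k$ uniquely and exhibit the $h$-column as the corresponding nonnegative combination of the pure $ik$-columns. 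Verifying that this combination is \emph{consistent} for all $b$ simultaneously — i.e. that the $c_k$ are well-defined functions of $b$ and the identity holds identically in $b$ — is where I would spend the most care, using \emph{Invariance} (via the relation $N_b = M_{E_p b}$) to reduce to the normalized state $b' = E_p b$ where all prices are $1$.
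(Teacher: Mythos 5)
Your proof of part (1) is exactly the paper's: the support of the $h$-th column of $M_b$ (hence of $N_b$) is confined to coordinate $j$, and column-stochasticity then forces the column to be $e_j$. For part (2) your core step (Flexibility plus part (1)) is also the paper's, but the ``main obstacle'' you identify is illusory: since the pure $K_i$-columns are standard basis vectors, the only thing to check is that for each $k$ in the support of a $K_i$-column the vector $e_k$ actually occurs as a pure column (which Flexibility gives, because a nonzero $k$-entry makes $h'$ an $ik$-index), after which $v=\sum_k v_k e_k$ is automatically the required linear combination with coefficients equal to its own entries --- no value-conservation, exchange-ratio, or sign-pattern argument is needed to ``pin down'' anything.
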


\begin{proof}
By definition there is an $h$-offer $a$ such that $r\left(  a,b\right)
=M_{b}a$ is a $j$-return vector. This means that the $h$-th column of $M_{b}$
has a non-zero entry only in its $j$-th component. Since $N_{b}$ is obtained
from $M_{b}$ by rescaling entries this is also true of $N_{b}$. By column
stochasticity the $h$-th column of $N_{b}$ must be $e_{j}$.

For the second part, let $h^{\prime}\in K_{i}$ be an $i$-index, let $v,w$ be
the $h^{\prime}$-th columns of $N_{b}$ and $M_{b}$, and suppose the $j$-th
component of $v$ (and hence of $w$) is non-zero. It suffices to show that in
this case the mechanism has a pure $ij$-index. However if $a$ is an
$h^{\prime}$-offer then $r\left(  a,b\right)  =M_{b}a$ is a multiple of $w$,
and thus the assertion follows from \emph{Flexibility}.
\end{proof}

Let $G$ be the graph in which we connect $i$ to $j$ if $M$ has a pure
$ij$-index. Since $M$ is connected, Lemma \ref{lm:pure} implies that $G$ is
connected, and we let $M^{\prime}=M_{G}$ denote the corresponding
$G$-mechanism. We will identify the $i$-indices $K_{i}^{\prime}$ of
$M^{\prime}$ as a subset of $K_{i}$. If $M$ has several pure $ij$-indices for
a given $j$ then this involves a choice, however the choice will play no role
in the subsequent discussion. We will refer to $M^{\prime}$ as the embedded
$G$-mechanism of $M$.

To continue we need a result from \cite{Sahi:2013}. Let $G$ be any connected
directed graph on $\left\{  1,\ldots,n\right\}  $ with weights $z_{ij}$
attached to edges $ij\in G$. We write $Z=\left(  z_{ij}\right)  $ for the
$n\times n$ matrix of edge weights of $G$, setting $z_{ij}=0$ if $ij\notin G$.
We also define
\[
\delta_{j}=%
{\textstyle\sum\nolimits_{i}}
z_{ij},\quad\Delta_{Z}=diag\left(  \delta_{1},\ldots,\delta_{n}\right)  ,
\]
so that $\Delta$ is the diagonal matrix of column sums of $Z$. We define the
weight of a subgraph $\Gamma$ to be the product of its edge weights, thus
\[
w_{\Gamma}\left(  z\right)  =\prod\nolimits_{ij\in E_{\Gamma}}z_{ij}.
\]
We define an $i$\emph{-tree} in $G$ to be a (directed) subgraph $T$ with $n$
vertices and $n-1$ edges, and the futher property that $T$ contains a path
from $j$ to $i$ for every $j\neq i$. We write $\mathcal{T}_{i}$ for the set of
$i$-trees in $G$, and define%
\[
w_{i}=\sum\nolimits_{\Gamma\in\mathcal{T}_{i}}w_{\Gamma}\left(  z\right)
,\quad w=\left(  w_{1},\ldots,w_{n}\right)  ^{t}.
\]
The following lemma from \cite{Sahi:2013} is critical and paves the way for
the rest of the analysis.

\begin{lemma}
\label{XDw}If $Z,\Delta_{Z},w$ are as above then one has $Zw=\Delta_{Z}w.$
\end{lemma}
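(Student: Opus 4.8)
The identity $Zw=\Delta_{Z}w$ is the weighted directed Matrix--Tree theorem (also known as the Markov chain tree theorem): the vector of spanning-tree weights of a weighted digraph spans the kernel of the Laplacian $\Delta_{Z}-Z$. One could simply invoke this, but since it underpins the remainder of the analysis I would include a proof, and the cleanest one is a weight-preserving bijection. The plan: fix a vertex $j$, so that the $j$-th coordinate of $Zw=\Delta_{Z}w$ is the scalar identity
\[
\sum\nolimits_{k}z_{jk}w_{k}=\left(\sum\nolimits_{i}z_{ij}\right)w_{j}=\delta_{j}w_{j}.
\]
Expand $w_{k}=\sum_{T\in\mathcal{T}_{k}}\prod_{e\in E_{T}}z_{e}$ and recall that in a $k$-tree the root $k$ has no incoming edge while every other vertex has exactly one. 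Then a generic summand on the left is the edge-weight product $\prod_{e\in E_{H}}z_{e}$ of the subgraph $H:=E_{T}\cup\{jk\}$ --- note $j\neq k$ since $z_{jj}=0$, and $jk\notin E_{T}$ since $k$ receives no edge in $T$, so $H$ is an honest $n$-edge subgraph of $G$.

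The crux is that $H$ is a connected spanning subgraph on $n$ vertices with $n$ edges in which \emph{every} vertex has in-degree one: adjoining $jk$ supplies $k$'s missing incoming edge and changes nothing else, and $H\supseteq E_{T}$ is connected because $T$ spans. Hence $H$ has a unique cycle, and that cycle passes through $j$ because it must use the incoming edge at $k$, namely $jk$. Conversely, from any connected $n$-edge spanning subgraph $H$ with every in-degree one whose unique cycle contains $j$, one recovers the unique vertex $k$ with $jk$ on the cycle, and deleting $jk$ leaves a $k$-tree $T$ with $E_{T}\cup\{jk\}=E_{H}$; this inverts the previous construction. Thus the left-hand side equals $\sum_{H}\prod_{e\in E_{H}}z_{e}$, where $H$ runs, with multiplicity one, over all such subgraphs whose cycle meets $j$. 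The same argument on the right-hand side --- each summand $z_{ij}\prod_{e\in E_{T}}z_{e}$ with $T\in\mathcal{T}_{j}$ is the edge-weight product of $E_{T}\cup\{ij\}$, and as $(i,T)$ varies this enumerates exactly the same family of graphs $H$, with $i$ now recovered as the unique vertex having $ij$ on the cycle --- shows the right-hand side is the identical sum. Hence the two sides agree, which is the lemma.

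The step I expect to need the most care is the verification that these two correspondences really are bijections, i.e.\ that every eligible $H$ is produced exactly once on each side. This is elementary but fiddly: its content is that deleting one edge of the unique cycle of a connected spanning subgraph with all in-degrees one leaves a spanning tree rooted at the head of the deleted edge, that this deletion inverts ``adjoin to a rooted spanning tree the missing edge at its root,'' and that no multiple edges or extra cycles are ever created. A non-combinatorial alternative bypasses all of this: $\Delta_{Z}-Z$ has vanishing column sums, so $\mathbf{1}^{t}$ spans its left kernel; for connected $G$ its rank is exactly $n-1$, so its adjugate has rank one with every row a scalar multiple of $\mathbf{1}^{t}$, whence the common column of the adjugate lies in $\ker(\Delta_{Z}-Z)$; and by the all-minors Matrix--Tree theorem the diagonal entries of that adjugate are precisely $w_{1},\dots,w_{n}$, so that column is a scalar multiple of $w$.
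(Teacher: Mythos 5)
The paper does not prove this lemma at all---it is quoted from \cite{Sahi:2013}---so your self-contained argument is necessarily a different route from the paper's. What you give is the standard bijective proof of the weighted directed matrix-tree (Markov chain tree) identity: expand the $j$-th coordinate of both sides, attach the extra edge ($jk$ on the left, $ij$ on the right) to the root of the corresponding tree, and observe that both sides enumerate, once each, the connected spanning subgraphs with $n$ edges and all in-degrees one whose unique directed cycle passes through $j$; deleting the cycle edge leaving $j$ (resp.\ entering $j$) inverts the construction. This is correct, and the fiddly inverse step you flag does go through: in $T\cup\{jk\}$ any cycle must use the new edge, the unique cycle is the tree path from $k$ to $j$ closed by $jk$, and removing one cycle edge from such a unicyclic graph leaves a spanning tree whose in-degree pattern forces it to be rooted at the head of the removed edge. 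The alternative adjugate route is also viable but not really cheaper: it needs the rank-$(n-1)$ fact (equivalently $w\neq 0$, which follows from strong connectivity) and invokes the all-minors matrix-tree theorem, which is the same circle of ideas.

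One point you should make explicit rather than ``recall'': your proof uses the convention that a $k$-tree is an arborescence directed away from $k$ (root in-degree $0$, every other vertex in-degree exactly $1$, i.e.\ a path from $k$ to every $j$). The paper's literal definition is the opposite orientation---a path from every $j$ to $i$---under which it is the out-degrees that are $0/1$ and in-degrees are unconstrained; in particular your assertions ``$jk\notin E_T$ since $k$ receives no edge in $T$'' and the degree count are false for that reading. Indeed, with the paper's literal conventions (in-trees together with $\Delta_Z$ built from column sums) the displayed identity itself fails already for $n=2$ with edges $12,21$: there $w=(z_{21},z_{12})$, so $(Zw)_1=z_{12}^{2}$ while $(\Delta_Z w)_1=z_{21}^{2}$. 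The column-sum identity holds precisely for out-trees (the in-tree vector instead satisfies the transposed identity with row sums), and the out-tree version is what the paper actually uses when solving $C_bp=\Delta_bp$ (for two commodities the solution is $p_1\propto (C_b)_{12}$, an out-tree weight). So your argument proves the intended and needed statement; just state the orientation convention you are adopting, and note that it corrects, rather than follows, the definition as printed.
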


We can now prove a key property of embedded $G$-mechanisms.

\begin{proposition}
If a price ratio depends on some variable in $M^{\prime}$, then it does so in
$M$.
\end{proposition}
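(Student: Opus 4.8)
The plan is to express prices through the tree-sum formula of Lemma \ref{XDw} applied to the matrix $C_b$ of Lemma \ref{lm:Nb}, and then to compare $M$ with its embedded $G$-mechanism $M'$ by letting the offers at the extra indices in $K\setminus K'$ tend to $0$.

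First I would record a structural fact. Writing $(C_b)_{jk}=\sum_{h\in K_k}(N_b)_{jh}b_h$, the pattern of nonzero entries of $C_b$ is independent of $b\in S_+$: by Lemma \ref{lm:span}(2) each $K_k$-column of $N_b$ is a non-negative combination of the pure $K_k$-columns, and by Lemma \ref{lm:span}(1) each such pure column is a standard basis vector $e_j$ with $kj\in G$; hence $(C_b)_{jk}>0$ forces $kj\in G$, while conversely a chosen pure $kj$-index contributes $b_h>0$ to $(C_b)_{jk}$. Thus $C_b$ is supported on a fixed graph (the reverse of $G$), which is strongly connected because $G$ is. Lemma \ref{XDw} with $Z=C_b$ (for which $\Delta_Z=\Delta_b$) then gives $p(b)=w(b)$ up to a positive scalar, where $w_i(b)=\sum_{\Gamma\in\mathcal T_i}\prod_{jk\in\Gamma}(C_b)_{jk}$ is a strictly positive polynomial in the entries of $C_b$, summed over a fixed nonempty family $\mathcal T_i$ of trees; in particular $p_{ij}(b)=w_i(b)/w_j(b)$. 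The same applies verbatim to $M'$, with the same graph and the same families $\mathcal T_i$, so $p'_{ij}(b')=w'_i(b')/w'_j(b')$, where now $(C'_{b'})_{jk}=\sum_{h\in K'_k}(N'_{b'})_{jh}b'_h$ equals $b'_h$ for the chosen pure $kj$-index $h$ when $kj\in G$, and $0$ otherwise.

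Next I would connect the two mechanisms. For $b'\in S'_{+}$ let $b(\epsilon)\in S_+$ be the offer with $b(\epsilon)_h=b'_h$ for $h\in K'$ and $b(\epsilon)_h=\epsilon$ for $h\in K\setminus K'$; I would show $p_{ij}(b(\epsilon))\to p'_{ij}(b')$ as $\epsilon\to0^+$. For each $(j,k)$ the part of $(C_{b(\epsilon)})_{jk}$ coming from $K_k\cap K'$ equals $(C'_{b'})_{jk}$ exactly, since by Lemma \ref{lm:span}(1) the pure $K'$-columns of $N_{b(\epsilon)}$ are $b$-independent basis vectors, while the part coming from $K_k\setminus K'$ is at most $|K_k|\,\epsilon$ because $N_{b(\epsilon)}$ is column-stochastic; hence $C_{b(\epsilon)}\to C'_{b'}$ entrywise. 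Since $w_i(b(\epsilon))$ and $w_j(b(\epsilon))$ are polynomials in these entries and $w'_j(b')>0$, the ratio $p_{ij}(b(\epsilon))=w_i(b(\epsilon))/w_j(b(\epsilon))$ converges to $w'_i(b')/w'_j(b')=p'_{ij}(b')$.

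Finally I would conclude. Suppose $p'_{ij}$ depends on some variable $h^*\in K'$, witnessed by $b',b''\in S'_{+}$ that differ only in the $h^*$-coordinate with $p'_{ij}(b')\neq p'_{ij}(b'')$. Their extensions $b(\epsilon)$ and $\tilde b(\epsilon)$ in $S_+$ also differ only in the $h^*$-coordinate, and by the previous step $p_{ij}(b(\epsilon))\to p'_{ij}(b')$ and $p_{ij}(\tilde b(\epsilon))\to p'_{ij}(b'')$; as these limits are distinct, $p_{ij}(b(\epsilon))\neq p_{ij}(\tilde b(\epsilon))$ for all small $\epsilon>0$, so $h^*$ is influential for $p_{ij}$ in $M$. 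The step I expect to require the most care is the convergence $C_{b(\epsilon)}\to C'_{b'}$ together with the identification $p(b)=w(b)$: this is precisely where Lemma \ref{lm:span} is needed (to pin down the $b$-independent support of $C_b$ and the constancy of the pure columns, so that only the ``extra'' contributions vanish in the limit) and where Lemma \ref{XDw} is needed (to replace the eigenvector characterization of prices by a manifestly continuous, strictly positive formula, thereby avoiding any perturbation-of-eigenvectors argument). The rest is bookkeeping.
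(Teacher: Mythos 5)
Your proposal is correct and follows essentially the same route as the paper: express prices via the tree-sum formula of Lemma \ref{XDw} applied to $C_b$, use Lemma \ref{lm:span} to split each entry of $C_b$ into the pure ($K'$) part plus mixed contributions bounded by the extra variables, and recover the $M'$ price ratio as the limit of the $M$ price ratio as the mixed components tend to $0$. Your last step, extending the two witnessing offers of $M'$ so they still differ in only one coordinate, merely spells out the bookkeeping the paper leaves implicit.
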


\begin{proof}
The pure columns of $N_{b}$ are fixed unit vectors, independent of $b$. By
assumption there is a bijection between the pure variables and the nonzero
entries $c_{rs}\left(  b\right)  $ of the matrix $C_{b}$. We denote the pure
components of $b$ by $x=\left(  x_{rs}\right)  $ and the remaining mixed
components by $y=\left(  y_{k}\right)  $. Then by the definition of $C_{b}$ we
have an expression of the form%
\begin{equation}
c_{rs}\left(  b\right)  =x_{rs}+\sum\nolimits_{k}\varepsilon_{k}\left(
b\right)  y_{k}\text{; }0\leq\varepsilon_{k}\left(  b\right)  \leq1.
\label{=crs}%
\end{equation}
By formula (\ref{=Delp}) and Lemma \ref{XDw}, the prices $p$ in $M$ and
$M^{\prime}$ are weighted sums of trees with edge weights $c_{rs}$ and
$x_{rs}$ repectively. Let $p\left(  x,y\right)  $ denote the price vector in
$M$ at $b=\left(  x,y\right)  $ and let $p\left(  x\right)  $ denote the price
vector in $M^{\prime}$ at $x$. Then by (\ref{=crs}) we get
\[
p\left(  x\right)  =\lim_{y\rightarrow0}p\left(  x,y\right)  .
\]

We now fix a pair of commodities $i,j$ and let $\pi\left(  x,y\right)  $ and
$\pi\left(  x\right)  $ denote the price ratios $p_{i}/p_{j}$ in $M$ and
$M^{\prime}$ respectively, then we have
\[
\pi\left(  x\right)  =\lim_{y\rightarrow0}\pi\left(  x,y\right)  .
\]
Thus if $\pi\left(  x\right)  $ depends on some $x$-component, so must
$\pi\left(  x,y\right)  $.
\end{proof}

\begin{proof}
[Proof of Theorem \ref{Emergence of G-mechanisms}]By lemma \ref{lm:pure},
lemma \ref{lm:span} and the previous proposition (respectively), we have:
\[
\tau_{ij}\left(  M^{\prime}\right)  =\tau_{ij}\left(  M\right)  ,\text{
}k\left(  M^{\prime}\right)  \leq k\left(  M\right)  ,\text{ }\pi_{ij}\left(
M^{\prime}\right)  \leq\pi_{ij}\left(  M\right)
\]
If $M$ is minimal then equality must hold throughout. Hence we get $k\left(
M^{\prime}\right)  =k\left(  M\right)  $ and so $M=M^{\prime}$ is a
$G$-mechanism.$.$
\end{proof}

\section{Proof of Theorem \ref{Emergence of money}}

First let us recall some basic order-theoretic notions.

\begin{definition}
\label{qo}A \emph{quasiorder} $\precsim$ on a set $X$ is a binary relation
that is reflexive ($x\precsim x)$ and transitive:%
\[
x\precsim y,y\precsim z\implies x\precsim z.
\]

We write $x\prec y$ if $x\precsim y$ holds but $y\precsim x$ does not hold. We
say that $x$ is $\precsim$-\emph{minimal} if there is no $y$ in $X$ such
$y\prec x$, equivalently if for all $y\in X$
\[
y\precsim x\implies x\precsim y.
\]

We write $X_{\prec}$ for the set of $\precsim$-minimal elements of $X$. We say
that $\precsim$ is a \emph{well}-quasiorder (wqo) if there does not exist an
infinite descending chain%
\[
\cdots\prec x_{n}\prec\cdots\prec x_{2}\prec x_{1}.
\]

\end{definition}

Note that if $\precsim$ is a wqo on $X$ and $Y\subset X$ then the restriction
of $\precsim$ defines a wqo on $Y$. In general the minimal elements $Y_{\prec
}$ can be quite different from $X_{\prec}$, however we have the following
elementary result.

\begin{lemma}
\label{wqo}If $\left(  X,\precsim\right)  $ is a wqo and $X_{\prec}\subset
Y\subset X$ then $X_{\prec}=Y_{\prec}$.
\end{lemma}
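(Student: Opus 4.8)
The plan is to prove Lemma~\ref{wqo} purely from the order-theoretic definitions; no feature of the exchange mechanism enters. Recall that $X_{\prec}$ denotes the set of $\precsim$-minimal elements of $X$, and $Y_{\prec}$ the set of elements of $Y$ that are minimal with respect to the restriction of $\precsim$ to $Y$. The hypotheses are that $\precsim$ is a wqo on $X$ and that $X_{\prec}\subset Y\subset X$.

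First I would show $X_{\prec}\subseteq Y_{\prec}$. Let $x\in X_{\prec}$. Since $X_{\prec}\subset Y$ we have $x\in Y$, so it makes sense to ask whether $x$ is $\precsim$-minimal \emph{within} $Y$. Suppose $y\in Y$ with $y\precsim x$; since $Y\subset X$ this is an inequality in $X$, and minimality of $x$ in $X$ gives $x\precsim y$. Hence $x$ is minimal in $Y$, i.e.\ $x\in Y_{\prec}$.

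Next I would show $Y_{\prec}\subseteq X_{\prec}$, and this is the step where the wqo hypothesis is actually used. Let $y\in Y_{\prec}$. I claim there exists $x\in X_{\prec}$ with $x\precsim y$. This is the key sub-fact: in a wqo, below every element lies a minimal element. To see it, suppose not; then starting from $y_{1}=y$ one can, since $y_1$ is not minimal, pick $y_2\prec y_1$; since $y_2$ is not minimal (if it were we'd be done, as $y_2\precsim y_1=y$), pick $y_3\prec y_2$; iterating produces an infinite descending chain $\cdots\prec y_3\prec y_2\prec y_1$, contradicting the wqo property. So such an $x\in X_{\prec}$ exists with $x\precsim y$. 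But $X_{\prec}\subset Y$, so $x\in Y$ and $x\precsim y$ with $y\in Y_{\prec}$; minimality of $y$ in $Y$ forces $y\precsim x$. Thus $x\precsim y$ and $y\precsim x$, and since $x$ is $\precsim$-minimal in $X$, transitivity shows $y$ is too: if $z\in X$ with $z\precsim y$ then $z\precsim x$, hence $x\precsim z$ (minimality of $x$), hence $y\precsim x\precsim z$. So $y\in X_{\prec}$. Combining the two inclusions yields $X_{\prec}=Y_{\prec}$.

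The only real obstacle is the sub-fact that every element of a wqo dominates a minimal element, and the argument above handles it by the standard "otherwise build an infinite strictly descending chain'' device; there is a minor subtlety in that one must observe that the chain one constructs is strictly descending at each stage (using $\prec$, not merely $\precsim$), which is exactly what the negation of minimality supplies. Everything else is a routine unwinding of definitions together with reflexivity and transitivity of $\precsim$. I would present it in roughly the order above: the easy inclusion first, then the extraction of a minimal lower bound, then closing the loop with an antisymmetry-style argument at the level of $\precsim$-classes.
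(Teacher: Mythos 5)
Your proof is correct and follows essentially the same route as the paper's: the easy inclusion $X_{\prec}\subseteq Y_{\prec}$ by restriction, plus the key wqo fact that every element lies above a minimal element (proved by the infinite-descending-chain contradiction). The only cosmetic difference is that the paper phrases the second inclusion contrapositively (any $z\in Y\setminus X_{\prec}$ has some $x\in X_{\prec}\subset Y$ with $x\prec z$, so $z$ is not minimal in $Y$), which lets it skip your final transitivity step, but the substance is identical.
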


\begin{proof}
Any minimal element of $X$ that happens to lie in $Y$ is clearly minimal in
$Y$. Thus $X_{\ast}\subset Y$ implies $X_{\prec}\subset Y_{\prec}$. On the
other hand if $z$ is a non-minimal element of $X$ then $x\prec z$ for some
$x\in X_{\prec},$ otherwise we could construct an infinite descending chain in
$X$ starting with $z$. In particular any $z\in Y\setminus X_{\prec}$ satisfies
$x\prec z$ for some $x\in X_{\prec}\subset Y$, hence $z$ is not minimal in
$Y.$
\end{proof}

It is easy to check that both the quasiorders $\preceq$ and $\preceq_{w}$ that
we have introduced on $\mathfrak{M}$ are wqo's; and therefore in the proof
below we will apply the previous lemma to them.

\begin{proof}
[Proof of Theorem \ref{Emergence of money}]Let $\mathfrak{S}$ denote the set
consisting of the three special mechanisms. We need to show that $\left(
\mathfrak{M}_{g}\right)  _{\prec_{w}}=\mathfrak{S}$ and $\left(
\mathfrak{M}_{\ast}\right)  _{\prec_{w}}=\mathfrak{S}$.

It is shown in \cite{Dubey-Sahi-Shubik}\ that $\left(  \mathfrak{M}%
_{g}\right)  _{\prec_{w}}=\mathfrak{S}$.

We now prove $\left(  \mathfrak{M}_{\ast}\right)  _{\prec_{w}}=\mathfrak{S}$.
Since $\left(  \mathfrak{M}_{g}\right)  _{\prec_{w}}=\mathfrak{S}$, by Lemma
\ref{wqo} applied to the wqo $\preceq_{w}$it suffices to show that
$\mathfrak{S\subset M}_{\ast}$. We further note that
\begin{equation}
\mathfrak{M}_{\ast}=\left(  \mathfrak{M}_{g}\right)  _{\prec}. \label{Mg*}%
\end{equation}
Indeed $\mathfrak{M}_{\ast}=\mathfrak{M}_{\prec}$ by definition, and
$\mathfrak{M}_{\ast}\subset\mathfrak{M}_{g}$ by Theorem
\ref{Emergence of G-mechanisms}; now (\ref{Mg*}) follows from Lemma \ref{wqo}
applied to the wqo $\preceq$. Thus it suffices to prove that
\begin{equation}
\mathfrak{S}\subset\left(  \mathfrak{M}_{g}\right)  _{\prec} \label{Mg<}%
\end{equation}
\textit{i.e.,} that each of the three special mechanisms is $\preceq$-minimal
in $\mathfrak{M}_{g}$.

The $\preceq$-minimality is obvious for the complete graph since any other
graph would have some $\tau_{ij}>1$, and also for the cycle since any other
graph would have some $k_{i}>1$. To establish $\preceq$-minimality for the
star graph, it suffices to show that any non-star graph $G$ has either some
$\pi_{ij}\geq5$ or some $\tau_{ij}\geq3$. For this we note that $\pi\geq5$
holds by Theorem 16 of \cite{Dubey-Sahi-Shubik} if $G$ is not a rose or a
chorded cycle; while $\tau\geq3$ holds trivially for non-star roses and by
Lemma 37 of \cite{Dubey-Sahi-Shubik} for chorded cycles. This completes the
proof of (\ref{Mg<}) and hence of $\left(  \mathfrak{M}_{\ast}\right)
_{\prec_{w}}=\mathfrak{S.}$
\end{proof}

\begin{remark}
For $m=3$, Lemma 37 of \cite{Dubey-Sahi-Shubik} does not hold and we have an
additional strongly minimal mechanism with $\left(  \tau,\pi\right)  =\left(
2,4\right)  $, namely the chorded triangle
\[%
\begin{tabular}
[c]{|lll|}\hline
$\cdot$ &  & \\
$\downarrow$ & $\nwarrow$ & \\
$\cdot$ & $\leftrightarrows$ & $\cdot$\\\hline
\end{tabular}
\ \ \ \ \
\]

\end{remark}


\begin{thebibliography}{99}                                                                                               %


\bibitem {Adam Smith}Smith, A. (1776). An Inquiry into the Nature and Causes
of the Wealth of Nations, London: Muethen \& Co., Ltd, 5th edition, 1904.

\bibitem {Bannerjee-Maskin(1996)}Bannerjee, A.V. and E. Maskin (1996). A
Walrasian theory of money and barter, \textit{Quarterly Journal of Economics,
}111(4): 955-1005.

\bibitem {Dubey-Shubik:1978}Dubey, P. and M. Shubik (1978). The noncooperative
equilibria of a closed trading economy with market supply and bidding
strategies. \textit{Journal of Economic Theory, }17 (1): 1-20.

\bibitem {Dubey-MasColell-Shubik:1980}Dubey, P. , A. Mas-Colell and M. Shubik
(1980). Efficiency properties of strategic market games: an axiomatic
approach. \textit{Journal of Economic Theory, }22 (2): 363-76

\bibitem {Dubey-Shapley: 1994}Dubey, P. and L. S. Shapley (1994).
Noncooperative general exchange with a continuum of traders. \textit{Journal
of Mathematical Economics, }23: 253-293.

\bibitem {Dubey-Sahi: 2003}Dubey, P. and S. Sahi (2003). Price-mediated trade
with quantity signals. \textit{Journal of Mathematical Economics,} Special
issue on strategic market games\textit{ }(in honor of Martin Shubik), ed. G.
Giraud, 39: 377-389

\bibitem {Dubey-Geanakoplos: 2003}Dubey, P. and J. Geanakoplos (2003). From
Nash to Walras via Shapley-Shubik. \textit{Journal of Mathematical Economics,
}Special issue on strategic market games\textit{ }(in honor of Martin Shubik),
ed. G. Giraud, 39: 391-400

\bibitem {Dubey-Geanakoplos ISLM}Dubey, P. and J. Geanakoplos (2003). Inside
and outside fiat money, gains to trade, and IS-LM. \textit{Economic Theory},
21(2-3): 347-497.

\bibitem {Dubey-Sahi-Shubik}Dubey, P., S. Sahi and M. Shubik (2015)
\textquotedblleft Money as Minimal Complexity\textquotedblright, arXiv:1512.02317

\bibitem {Foley:1970}Foley, D.K. (1970). Economic equilibrium with costly
marketing.\textit{ Journal of Economic Theory, }2(3): 276-91.

\bibitem {Hahn:1971}Hahn, F. H. (1971). Equilibrium with transactions costs.
\textit{Econometrica}\emph{, }39 (3): 417-39.

\bibitem {Heller: 1974}Heller, W. P. (1974). The holding of money balances in
general equilibrium. \textit{Journal of Economic Theory, }7: 93-108.

\bibitem {heller-Starr: 1976}Heller, W.P. and R. Starr. Equilibrium with
non-convex transactions costs: monetary and non-monetary economies.
\textit{Review of Economic Studies, }43 (2): 195-215.

\bibitem {Howitt-Clower: 2000}Howitt, P. and R. Clower. (2000). The emergence
of economic organization. \textit{Journal of Economic Behavior and
Organization, }41: 55-84

\bibitem {Iwai:1996}Iwai, K. (1996). The bootstrap theory of money: a search
theoretic foundation for monetary economics. \textit{Structural Change and
Economic Dynamics, }7: 451-77

\bibitem {Jevons:1875}Jevons, W.S. (1875). Money and the mechanism of
exchange. \textit{London: D. Appleton}

\bibitem {Jones:1976}Jones, R.A.(1976). The origin and development of media of
exchange: \textit{Journal of Political Economy, }84: 757-75

\bibitem {Knapp:1905}Knapp, G.F. (1905) \textit{Staatliche Theorie des
Geldes}, 4th edition, Munich and Leipzig: Duncker \& Humblot. Translated as
\textit{The State Theory of Money, London: Macmillan, }1924.

\bibitem {Kiyotaki-Wright: 1989}Kiyotaki, N and R. Wright (1989). On money as
a medium of exchange. \textit{Journal of Political Economy,} 97: 927-54

\bibitem {Kiyotaki-Wright: 1993}Kiyotaki, N and R. Wright (1993). A
search-theoretic approach to monetary economics. \textit{American Economic
Review, }83 (1): 63-77

\bibitem {Kualla:1920}Kaulla, R. (1920) \textit{Grundlagen des geldwerts,
}Stuttgart. Translation in Howard S. Ellis, \textit{German Monetary Theory:
1903-1933,} Cambridge, MA: Harvard University Press, 1934.

\bibitem {Lerner:1947}Lerner, A. P. (1947). Money as a creature of the state.
In \textit{Proceedings of the American Economic Association, }Vol 37: 312-17

\bibitem {Li-Wright:1998}Li, Y. and R. Wright (1998). Government transaction
policy, media of exchange, and prices. \textit{Journal of Economic Theory, }81
(2): 290-313

\bibitem {Mertens: 2003}Mertens, J. F. (2003). The limit-price mechanism.
\textit{Journal of Mathematical Economics, }Special issue on strategic market
games\textit{ }(in honor of Martin Shubik), ed. G. Giraud, 39: 433-528.

\bibitem {Menger:1892}Menger, C. (1892). On the origin of money.
\textit{Economic Journal}\emph{,} 2: 239-55, trans. Caroline A. Foley

\bibitem {Ostroy:1973}Ostroy, J.M. (1973). The informational efficiency of
monetary exchange, \textit{American Economic Review,}63 (4): 597-610.

\bibitem {Ostroy-Starr: 1974}Ostroy, J. and R. Starr. (1974). Money and the
decentralization of exchange. \textit{Econometrica, }42: 597-610

\bibitem {Ostroy-Starr: 1990}Ostroy, J. and R. Starr. (1990). The transactions
role of money. In B. Friedman and F. Hahn (eds) \textit{Handbook of Monetary
Economics, }New York: Elsevier, North- Holland: 3-62.

\bibitem {Peck:1992}Peck, J., K. Shell and S. E. Spear (1992). The market
game: existence and structure of equilibria. \textit{Journal of Mathematical
Economics}

\bibitem {Peck-Shell:1992}Peck, J. and K. Shell (1992). Market uncertainty:
correlated sunspot equilibria in imperfectly competitive economies.
\textit{The Review of Economic Studies,}1992

\bibitem {Postlewaite:1978}Postlewaite, A. and D. Schmeidler (1978).
Approximate efficiency of non-Walrasian equilibria. \textit{Econometrica, 46
(1): 127-36.}

\bibitem {Sahi:2013}Sahi, S. (2013). Harmonic vectors and matrix tree
theorems. arXiv preprint arXiv:1309.4047.

\bibitem {Sahi-Yao:1989}Sahi, S. and S. Yao. (1989). The noncooperative
equilibria of a closed trading economy with complete markets and consistent
prices. \textit{Journal of Mathematical Economics, }18: 325-346.

\bibitem {Schumpeter: 1954}Schumpeter, J. A. (1954). History of Economic
Analysis. New York: Oxford University Press.

\bibitem {Shapley: 1976}Shapley, L. S. (1976). Noncooperative general
exchange. In \textit{Theory and Measurement of Economic Externalities.
}Academc Press (ed)\textit{ }A. Y. Lin: 155-175.

\bibitem {Shapley-Shubik: 1977}Shapley, L. S. and M. Shubik (1977). Trade
using one commodity as a means of payment. \textit{Journal of Political
Economy, }85: 937-968.

\bibitem {Shubik: 1973}Shubik, M. (1973). Commodity money, oligopoly, credit
and bankruptcy in a general equilibrium model. \textit{Western Economic
Journal, }11: 24-38.

\bibitem {Shubik-Wilson}Shubik, M. and C. Wilson (1977). The optimal
bankruptcy rule in a trading economy using fiat money. \textit{Zeitschrift fur
Nationalokonomie, }37 (3-4): 337-354.

\bibitem {Shubik:1993}Shubik, M. (1999). The theory of money and financial
institutions. \textit{Cambridge, MA: MIT Press.}

\bibitem {starr:2012}Starr, R.M. (2012). Why is there money? \textit{Edward
Elgar}, Cheltenham, UK \& Northhampton, MA, US.

\bibitem {Starret:1973}Starret, D.A. (1973). Inefficiency and the demand for
money in a sequence economy. \textit{Review of Economic Studies,}40 (4): 437-48

\bibitem {Trejos-Wright: 1995}Trejos, A. and R.Wright (1995). Search,
bargaining, money and prices. \textit{Journal of Political Economy, } 103(1) : 118-41.

\bibitem {Wallace: 1980}Wallace, N. (1980). The overlapping generations model
of fiat money. In J. Karaken and N. Wallace, \textit{Models of Monetary
Economics, }Minneapolis, MN: Federal Reserve Bank of Minneapolis: 49-82
\end{thebibliography}
\end{document}